\tikzset{crossing/.style={cross out, draw=red, minimum size=2*(#1-\pgflinewidth), inner sep=0pt, outer sep=1pt, very thick}, crossing/.default={4pt}}
\newcolumntype{C}[1]{>{\centering\let\newline\\\arraybackslash\hspace{0pt}}m{#1}}
\newtheorem{claim}{Claim}
\newtheorem{lemma}{Lemma}
\newtheorem{observation}{Observation}
\newtheorem{theorem}{Theorem}
\newcommand{\problemdef}[3]{
	\begin{center}
	\begin{minipage}{0.95\columnwidth}
		\noindent
		\textsc{#1}
		\vspace{5pt}\\
		\setlength{\tabcolsep}{3pt}
		\begin{tabularx}{\textwidth}{@{}lX@{}}
			\textbf{Input:}     & #2 \\
			\textbf{Question:}  & #3
		\end{tabularx}
	\end{minipage}
	\end{center}
}
\newcommand{\optproblemdef}[3]{
	\begin{center}
	\begin{minipage}{0.95\columnwidth}
		\noindent
		#1
		\vspace{5pt}\\
		\setlength{\tabcolsep}{3pt}
		\begin{tabularx}{\textwidth}{@{}lX@{}}
			\textbf{Input:}     & #2 \\
			\textbf{Task:}  & #3
		\end{tabularx}
	\end{minipage}
	\end{center}
}
\newcommand{\problem}[1]{\textnormal{\textsc{#1}}}
\newcommand{\CRI}{\problem{Closeness Ratio Improvement}}
\newcommand{\CGM}{\problem{Closeness Gap Minimization}}
\newcommand{\SC}{\problem{Set Cover}}
\newcommand{\ccg}[2]{\textnormal{cc}_{{#1}}({#2})}
\newcommand{\ccrg}[3]{\textnormal{cc-ratio}_{{#1}}({#2}, {#3})}
\newcommand{\distg}[3]{\text{d}_{{#1}}({#2}, {#3})}
\newcommand{\meandistg}[3]{\tilde{d}_{{#1}}({#2}, {#3})}
\newcommand{\cclass}[1]{\textnormal{\textsf{#1}}}
\newcommand{\opt}{\textnormal{\textsf{opt}}}
\DeclareMathOperator*{\argmax}{arg\,max}
\newcommand{\ccworst}[2]{\textnormal{cc-worst}_{{#1}}({#2})}
\title{Equalizing Closeness Centralities via Edge Additions}
\author[1]{Alex Crane}
\author[2]{Sorelle A. Friedler}
\author[2]{Mihir Patel}
\author[1]{Blair D. Sullivan}
\affil[1]{University of Utah}
\affil[2]{Haverford College}
\date{}
\begin{document}

\maketitle

\begin{abstract}
    Graph modification problems with the goal of optimizing some measure of a given node's network position have a rich history in the algorithms literature. Less commonly explored are modification problems with the goal of \emph{equalizing} positions, though this class of problems is well-motivated from the perspective of equalizing social capital, i.e., algorithmic fairness.
    In this work, we study how to add edges to make the closeness centralities of a given pair of nodes more equal.
    We formalize two versions of this problem: \CRI, which aims to maximize the ratio of closeness centralities between two specified nodes, and \CGM, which aims to minimize the absolute difference of centralities. We show that both problems are \cclass{NP}-hard, and for \CRI{} we present a quasilinear-time $\frac{6}{11}$-approximation, complemented by a bicriteria inapproximability bound. In contrast, we show that \CGM\ admits no multiplicative approximation unless $\cclass{P} = \cclass{NP}$. We conclude with a discussion of open directions for this style of problem, including several natural generalizations.
\end{abstract}

\section{Introduction}

An important class of problems in the \emph{algorithmic fairness} literature presents as input a network (a graph, possibly with some algorithmically relevant metadata) and asks for an efficiently computed set of~\emph{network interventions} meant to alter the social dynamics of the network. Generally, the motivating assumption is that real-world resources accrue to people in prominent positions within social networks, and that the distribution of this so-called \emph{social capital} hinges on network structure itself~\cite{granovetter1973strength}. For a broad survey, see~\cite{saxena2024fairsna}.
Work in this direction has generally proceeded along two (occasionally overlapping) lines of inquiry.

The first has focused on studying classic combinatorial optimization problems with \emph{fairness constraints}, and exploring the algorithmic cost of these constraints by either proving impossibility results or lifting algorithmic techniques used for the classic unconstrained version of the problem.
Fair clustering has received special emphasis, e.g.,~\cite{ahmadi2020fair,ahmadian2020fair,ahmadian2023improved,abbasi2021fair,ghadiri2021socially,vakilian2022improved,davies2024simultaneous}, especially since the seminal work of Chierichetti~\emph{et al.}~\cite{chierichetti2017fair}.
Often, relatively tractable combinatorial problems are significantly harder when the solution space is constrained to enforce fairness, e.g.,~\cite{casel2023fair},
though there have been some algorithmic success stories involving bicriteria approximations, e.g.~\cite{bercea2019cost,abbasi2021fair}, or dual parameterizations, e.g.,~\cite{komusiewicz2012cluster}.
Somewhat more successful has been the trend of encoding fairness via a modified objective function, rather than by imposing additional constraints
on a classic objective, e.g.,~\cite{froese2024modification,abbasi2021fair}. With this perspective, a series of works on minimax correlation clustering (beginning with~\cite{puleo2016correlation}, most recently~\cite{davies2024simultaneous}) as well as modification with ``degree constraints,'' e.g.,~\cite{komusiewicz2012cluster,adriaens2022diameter}, can be seen through a fairness lens, even if fairness was not the original motivation.
This line of work has been primarily theoretical, although an important practical insight, seen in e.g.,~\cite{bercea2019cost,davies2024simultaneous,crane2025edge}, is that the \emph{cost} of fairness is often low in the sense that fair solutions are also often high-quality when evaluated by traditional criteria.

A second line of inquiry is distinguished by an empirical focus 
on information access, combined with some theoretical analysis, often in the context of some non-trivial diffusion model such as Independent Cascade. This has given rise to fair variants of, e.g., the classic influence maximization problem~\cite{fish,stoica2019fairness}. 
Such interventions could include selecting vertices at which to seed information (in which case the intervention is a modification of the metadata, rather than the underlying graph) or modifying the graph directly through e.g., edge addition, including formal or empirical analysis of the information spreading properties of proposed heuristics~\cite{bashardoust, stoica2020seeding, fish, stoica2019fairness, liu2021rawlsnet, jalali2023fairness}. 
While the majority of works in this literature have introduced and experimentally evaluated heuristics, recently a greater interest in the analysis of algorithms has emerged, e.g.,~\cite{bhaskara2024optimizing, rui2023scalable}.
This network-focused fairness work builds on a broader literature in algorithmic fairness that has considered both fairness as addressed through added constraints (e.g., \cite{hardt2016equality}) and via modified objective functions (e.g., \cite{friedler2019comparative}) where 
some emphasis has been given to understanding which classes of objective functions are most socially desirable from a fairness perspective, e.g.,~\cite{yeh2024analyzing}.

Both lines of work can be divided further into approaches which emphasize \emph{individual fairness} by optimizing a maximin or minimax objective over all nodes, e.g.,~\cite{fish,puleo2016correlation}, or by imposing a per-node fairness constraint, e.g.,~\cite{komusiewicz2012cluster}, and approaches which emphasize \emph{group fairness} by optimizing various measures based on demographic node attributes, e.g.,~\cite{farnad2020unifying,stoica2019fairness}.\looseness=-1

Our work is inspired by elements of both research trends. We take a group fairness approach and directly optimize a fairness objective, but we do so in a purely graph-theoretic context, and our approach is closely aligned with a classic discrete algorithms perspective.
Instead of considering a complex model of information flow, we use \emph{centrality} to measure a node's social capital, e.g., in the spirit of~\cite{borgatti1998network}.
Though many notions of centrality exist, we focus on \emph{closeness centrality},
which possesses the dual advantages of being simple to understand and exhibiting many qualities desirable in a centrality measure\footnote{In our context (undirected connected graphs), the ``main problem'' with closeness centrality identified by~\cite{boldi2014axioms} (non-reachability) does not occur.}~\cite{boldi2014axioms}.
Thus, closeness is a natural starting point for the goal of algorithmically equalizing node centralities.

Formally, we consider a simple, unweighted, undirected graph $G$ with two specified vertices $a, b$ and an integer $k$.
Our task is to add at most $k$ edges to $G$ while maximizing the ratio of the centralities of $a$ and $b$, i.e., the best possible objective value is $1$. Here,
the closeness centrality $\ccg{G}{v}$ of a vertex $v$ is the sum of $v$'s shortest-path distances to all other vertices.\footnote{Closeness is also commonly defined as the reciprocal of this quantity. In the context of our objective (a ratio of closeness centralities), there is no meaningful distinction between these definitions. The impossibility result we establish for a gap-variant (\Cref{cor:gap-minimization-hard}) also holds for both definitions.}

Naturally, our work is closely related to the vast algorithmic literature on edge-modification in graphs; see e.g., the recent survey by Crespelle \emph{et al.}~\cite{crespelle2023survey}.
Especially relevant is the body of work on adding edges to optimize a given node's centrality~\cite{bergamini2016,crescenzi2016,shan2018,avrachenkov2006effect,olsen2014approximability}, or even to optimize a group's centrality~\cite{medya2018}.
Conceptually, a major distinguishing feature of our work is that the ratio objective
is not amenable (see~\Cref{sec:strategies}) to approaches appearing frequently in the edge-modification literature, e.g., submodularity analysis or reductions to clustering problems~\cite{adriaens2022diameter,bilo,demaine,li,meyerson,frati2015augmenting}.
Thus, among our technical contributions is a demonstration of how to reason about the structure of hard instances
in a way that can be harnessed algorithmically, even when the objective function is a ratio.\looseness=-1


\textbf{Our Results:} We study two optimization problems related to balancing the closeness centrality of a given pair of nodes. In \CRI{}, the goal is to maximize the ratio of the closeness centralities of two given vertices with a budgeted number of edge additions. We prove that this problem is \cclass{NP}-hard for any target ratio $\tau$ in the interval $(\frac{1}{2}, 1]$, and trivial when $\tau \leq \frac{1}{2}$, before building on these insights to give a $\frac{6}{11}$-approximation. Additionally, we show that it is \cclass{NP}-hard to approximate \CRI{} within any factor better than $\frac{5e}{5e + 1 - \varepsilon} \approx 0.932$, and we generalize this result to give a $\frac{5e^c}{5e^c + 1 - \varepsilon}$ approximation bound for any algorithm which is allowed to add $ck$ edges, where $c \geq 1$ is a constant. In \CGM{}, the goal is to minimize the absolute difference between the closeness centralities of two nodes, again with a budgeted number of edge additions. We prove that this problem is \cclass{NP}-hard and does not admit any multiplicative approximation algorithm. We conclude by proposing two new problems that extend this framework to balancing closeness centralities across groups of nodes or the entire graph, rather than focusing only on a fixed pair of vertices.

\section{Motivation and Problem Definition}
We introduce the problem of \CRI{} on a social network where people are modeled as vertices and edges represent connections between people. Specifically, let $G = (V, E)$ be an undirected graph with vertices $V$ and edges $E$. Our goal will be to add a budget of $k$ edges to $G$ to equalize the closeness centrality of two vertices. We denote a graph $G=(V,E)$ augmented with an edge set $S\subseteq V^2\backslash E$ as $G+S=(V,E\cup S)$. Within $G$, the shortest path length between two vertices $u,v\in V$ is represented as $\distg{G}{u}{v}$. Finally, we define the closeness centrality of a vertex $v\in V$ in a graph $G=(V,E)$ as $\ccg{G}{v}=\sum_{u\in V}\distg{G}{u}{v}$, the sum of the shortest paths to each other vertex in the graph. Note that having a smaller closeness centrality value means a node is closer to more vertices in the graph, and thus more important or central in the graph.

In the problem of \CRI{}, we are given a graph $G$ and vertices $a,b$, and the goal is to add the $k$ edges to $G$ which will maximize the ratio of $a$ and $b$'s closeness centralities. We define the closeness ratio of $a$ and $b$ as $\ccrg{G}{a}{b}=\frac{\min(~\ccg{G+S}{a}~,~\ccg{G+S}{b}~)}{\max(~\ccg{G+S}{a}~,~\ccg{G+S}{b}~)}$, so that we can guarantee our ratio is never more than 1.
\optproblemdef{\CRI{}}{A graph $G=(V,E)$, vertices $a,b\in V$, and a positive integer $k\in\mathbb{N}$.}{Find a set of edges $S$ of size at most $k$ which maximizes $\ccrg{G+S}{a}{b}$}

A key aspect of the problem definition is the choice to consider the ratio of closeness centralities instead of their difference. Another natural formulation is the following:
\optproblemdef{\CGM}{A graph $G=(V,E)$, vertices $a,b\in V$, and a positive integer $k\in\mathbb{N}$.}{Find a set of edges $S$ of size at most $k$ which minimizes $|\ccg{G+S}{a}-\ccg{G+S}{b}|$.}
Within the fairness literature, both the ratio and difference are commonly considered variants for equalization of various measures \cite{friedler2019comparative}, but recent work supports the use of ratio by default unless domain-specific concerns indicate otherwise \cite{yeh2024analyzing}. In the context of closeness centrality equalization, we additionally find justification for the use of the ratio instead of the difference as, unlike \CRI, \CGM{} is unlikely to admit any multiplicative approximation algorithm, as we show in Section~\ref{sec:hardness-section} (\Cref{cor:gap-minimization-hard}).

\subsection{Intuition-Building Examples}\label{sec:strategies}
We examine three natural strategies for solving \CRI{} and demonstrate that none of them yield good approximations. 
Observe~\Cref{fig:intuition-examples}, and note that for every constant $d$, for sufficiently large $X$, the distances from $a$ and $b$ to vertices in $IS_X$ dominate their closeness centralities. Hence, throughout the section we will
simplify calculations by ignoring other asymptotically negligible distances.

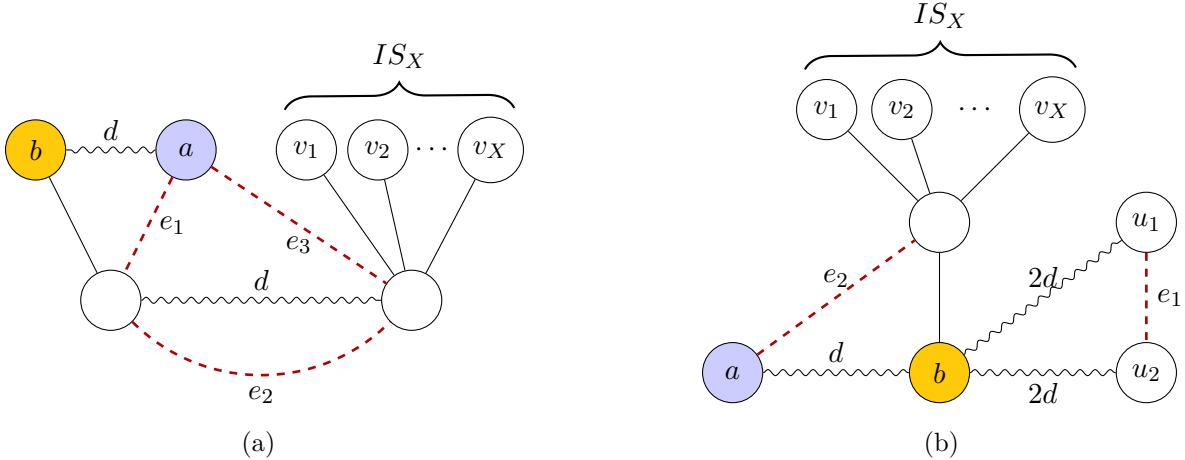
\begin{figure}
    \centering

    \begin{subfigure}[t]{0.45\textwidth}
        \centering
        \begin{tikzpicture}[ 
            roundnode/.style={circle, draw=black, minimum size=8mm},
            specialnode/.style={circle, draw=black, minimum size=18mm},
            rectnode/.style={rectangle, draw=black, minimum size=8mm}
        ]
        \node[roundnode, fill=blue!20] (A) at (3,1) {$a$};
        \node[roundnode, fill=yellow!60!orange] (B) at (1,1) {$b$};
        \node[roundnode] (C) at (2,-1) {};
        \node[roundnode] (D) at (6,-1) {};
    
        \node[roundnode] (IS1) at (4.6,1) {$v_1$};
        \node[roundnode] (IS2) at (5.55,1) {$v_2$};
        \node (dotsE) at (6.3,1) {$\cdots$};
        \node[roundnode] (ISX) at (7.05,1) {$v_X$};
        
        \draw[decorate, decoration={snake, amplitude=.4mm, segment length=2mm}] (A) -- (B) node[midway, above] {$d$};
        \draw[-] (B) -- (C);
        \draw[dashed, draw=red!70!black, line width=1pt] (A) -- (C) node[midway, right, text=black] {$e_1$};
        \draw[dashed, decorate, draw=red!70!black, line width=1pt, bend right=45] 
            (C) to node[midway, below, text=black] {$e_2$} (D);
        \draw[decorate, decoration={snake, amplitude=.4mm, segment length=2mm}] (C) -- (D) node[midway, above] {$d$};
    
        \draw[dashed, decorate, draw=red!70!black, line width=1pt] (A) to node[midway, below, text=black] {$e_3$} (D);
    
        \draw[-] (D) -- (IS1);
        \draw[-] (D) -- (IS2);
        \draw[-] (D) -- (ISX);
    
        \draw[decorate, decoration={brace, amplitude=10pt}, thick]
        ([yshift=7pt]IS1.north west) -- ([yshift=7pt]ISX.north east) node[midway, yshift=20pt] {$IS_X$};
    
        \end{tikzpicture}
        \caption{\centering\label{fig:intuition-examples-a}}
    \end{subfigure}
    \hfill
    \begin{subfigure}[t]{0.45\textwidth}
        \centering
        \begin{tikzpicture}[ 
            roundnode/.style={circle, draw=black, minimum size=8mm},
            specialnode/.style={circle, draw=black, minimum size=18mm},
        ]
        
        \node[roundnode, fill=blue!20] (A) at (-.75,1) {$a$};
        \node[roundnode, fill=yellow!60!orange] (B) at (2,1) {$b$};

        \node[roundnode] (C) at (2,3) {};

        \node[roundnode] (IS1) at (.5,4.5) {$v_1$};
        \node[roundnode] (IS2) at (1.5,4.5) {$v_2$};
        \node (dotsE) at (2.5,4.5) {$\cdots$};
        \node[roundnode] (ISX) at (3.5,4.5) {$v_X$};

        \node[roundnode] (U1) at (4.75,3) {$u_1$};
        \node[roundnode] (U2) at (4.75,1) {$u_2$};

        \draw[dashed, draw=red!70!black, line width=1pt] (U1) -- (U2) node[midway, right] {$e_1$};
        \draw[dashed, draw=red!70!black, line width=1pt] (A) -- (C) node[midway, above] {$e_2$};

        \draw[decorate, decoration={snake, amplitude=.4mm, segment length=2mm, post length=0mm, pre length=0mm}] (A) to node[midway, above] {$d$} (B);
        \draw[decorate, decoration={snake, amplitude=.4mm, segment length=2mm, post length=0mm, pre length=0mm}] (B) to node[midway, above] {$2d$} (U1);
        \draw[decorate, decoration={snake, amplitude=.4mm, segment length=2mm, post length=0mm, pre length=0mm}] (B) to node[midway, below] {$2d$} (U2);

        \draw[-] (B) -- (C);

        \draw[-] (C) -- (IS1);
        \draw[-] (C) -- (IS2);
        \draw[-] (C) -- (ISX);

        \draw[decorate, decoration={brace, amplitude=10pt}, thick]
        ([yshift=7pt]IS1.north west) -- ([yshift=7pt]ISX.north east) node[midway, yshift=20pt] {$IS_X$};
        
        \end{tikzpicture}
        \caption{\centering\label{fig:intuition-examples-b}}
    \end{subfigure}
    \caption{\label{fig:intuition-examples}The constructions used for counterexamples in Section~\ref{sec:strategies}. Dashed edges are potential additions, squiggly lines are paths of length $d$, and $IS_X$ denotes an independent set of size $X$.}

\end{figure}

Much of the existing work on centrality maximization relies on centrality objectives being monotone and submodular \cite{crescenzi2016,bergamini2016,medya2018,shan2018}, but our objective satisfies neither property.
Observe that in~\Cref{fig:intuition-examples-a}, $\ccrg{G}{a}{b}=\frac{d+2}{2d+2}$.
Consider candidate edge additions $e_1$ and $e_2$ and the resulting ratios: $\ccrg{G+e_1}{a}{b}=1$, $\ccrg{G+e_2}{a}{b}=\frac{3}{d+3}$, and $\ccrg{G+e_1+e_2}{a}{b}=1$.
Certainly $\ccrg{G}{a}{b}>\ccrg{G+e_2}{a}{b}$ shows that our objective is not monotone.
We also have that $\ccrg{G+e_1}{a}{b}-\ccrg{G}{a}{b}=1-\frac{d+2}{2d+2}=\frac{d}{2d+2}$, while $\ccrg{G+e_2+e_1}{a}{b}-\ccrg{G+e_2}{a}{b}=1-\frac{3}{d+3}=\frac{d}{d+3}$.
The marginal gain from adding $e_1$ to the graph is larger after adding $e_2$, and thus our objective is not submodular.

We also observe that the task of equalizing the centrality of two vertices is not necessarily the same as greedily improving the less central of the two (until they are equal). Again using~\Cref{fig:intuition-examples-a}, given that $\ccg{G}{a} > \ccg{G}{b}$, if we want to improve $\ccg{G}{a}$ as much as possible with $k=1$, we would add $e_3$.
But $\ccrg{G+e_3}{a}{b} = \frac{2}{d+2}$ while $\ccrg{G+e_1}{a}{b} = 1$, so this strategy does not lead to any constant-factor approximation.

Finally, we illustrate how our problem diverges from the well-researched problem of adding edges to a graph to minimize its diameter \cite{demaine,li,adriaens2022diameter,bilo,frati2015augmenting}.
In~\Cref{fig:intuition-examples-b}, we claim that $e_1$ is a diameter-minimizing edge addition.
Suppose otherwise; let $pq$ be an edge addition achieving diameter strictly less than $3d$, and WLOG assume $\distg{G}{a}{p} < \distg{G}{a}{q}$. If $\distg{G}{a}{p} \geq d$, then $\distg{G}{u_1}{q} < 2d$ and $\distg{G}{u_2}{q} < 2d$, but no vertex satisfies both inequalities, so we may assume that $\distg{G}{a}{p} = d - x$ for some positive $x$. Now, WLOG assume that $q$ lies on the path from $b$ to $u_2$, and let $\distg{G}{u_2}{q} = y$. We may assume that $x + y < d$, as otherwise $\distg{G+pq}{u_1}{u_2} \geq 3d$. But then it is simple to check that $\distg{G+pq}{a}{u_1} = 3d$. So, $e_1$ is diameter-minimizing, but $\ccrg{G+e_1}{a}{b} = \frac{2}{d+2}$, while $\ccrg{G+e_2}{a}{b} = 1$. It follows that minimizing the graph's diameter does not lead to any constant-factor approximation.

\subsection{A Trivial $\frac{1}{2}$-Approximation}\label{sec:half-approx}
We now present a simple observation: if $ab\in E$, then for all vertices $v$, $|\distg{G}{a}{v} - \distg{G}{b}{v}| \leq 1$. We will now show how to apply the triangle inequality to derive the following:

\begin{restatable}{observation}{adjacentobservation} \label{observation:adjacent}
    If $a$ and $b$ are adjacent, then $\ccrg{G}{a}{b} > \frac{1}{2}$.
\end{restatable}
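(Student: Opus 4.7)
The plan is to turn the hinted triangle-inequality bound $|\distg{G}{a}{v}-\distg{G}{b}{v}|\leq 1$ (which holds for every vertex $v$ because $ab\in E$) into a two-sided comparison of the closeness totals $\ccg{G}{a}$ and $\ccg{G}{b}$, and then to feed that comparison into the ratio definition. Without loss of generality I assume $\ccg{G}{a}\leq\ccg{G}{b}$, so that $\ccrg{G}{a}{b}=\ccg{G}{a}/\ccg{G}{b}$, and the goal becomes showing $\ccg{G}{b}<2\ccg{G}{a}$.

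First I would sum the one-sided bound $\distg{G}{b}{v}\leq\distg{G}{a}{v}+1$ over all $v\in V$. The naive sum would give $\ccg{G}{b}-\ccg{G}{a}\leq n$, which is not quite strong enough. The refinement is to split off the contributions from $v=a$ and $v=b$: at $v=a$ the difference is exactly $\distg{G}{b}{a}-\distg{G}{a}{a}=1-0=1$, while at $v=b$ it is $\distg{G}{b}{b}-\distg{G}{a}{b}=0-1=-1$, so these two terms cancel. For the remaining $n-2$ vertices the triangle inequality contributes at most $1$ each, yielding the sharper bound $\ccg{G}{b}-\ccg{G}{a}\leq n-2$.

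Next I would lower-bound $\ccg{G}{a}$ by noting that every vertex $v\neq a$ contributes at least $1$ to $\ccg{G}{a}$, so $\ccg{G}{a}\geq n-1$. Combining the two estimates, $\ccg{G}{b}\leq\ccg{G}{a}+(n-2)<\ccg{G}{a}+(n-1)\leq 2\ccg{G}{a}$, which gives $\ccrg{G}{a}{b}=\ccg{G}{a}/\ccg{G}{b}>1/2$ as required.

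The only subtlety (and the part that could be gotten slightly wrong) is keeping the inequality strict: the naive triangle-inequality sum loses a factor of $2$ at the boundary. Handling the $v=a$ and $v=b$ terms exactly rather than through the blanket bound is what produces the ``$n-2$ vs.\ $n-1$'' gap that buys the strict inequality. After that the argument is purely a few lines of arithmetic.
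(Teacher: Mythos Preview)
Your proof is correct and follows essentially the same approach as the paper: both establish $\ccg{G}{b}-\ccg{G}{a}\leq n-2$ by handling the $v=a$ and $v=b$ terms exactly and applying the triangle inequality to the remaining $n-2$ vertices, then combine this with the trivial lower bound $\ccg{G}{a}\geq n-1$ to obtain the strict inequality. The only cosmetic difference is that the paper phrases the final step via its ``adding to fractions'' observation, whereas you chain the inequalities directly as $\ccg{G}{b}\leq\ccg{G}{a}+(n-2)<\ccg{G}{a}+(n-1)\leq 2\ccg{G}{a}$; these are equivalent.
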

    To see~\Cref{observation:adjacent}, assume WLOG that $\ccg{G}{b}\leq\ccg{G}{a}$. By the triangle inequality, if $a$ and $b$ are adjacent, then for any vertex $v$, $\distg{G}{a}{v}\leq \distg{G}{a}{b}+\distg{G}{b}{v}=1+\distg{G}{b}{v}$. It follows that
    \begin{align*}
        \ccg{G}{a} &= \distg{G}{a}{a} + \distg{G}{a}{b} + \sum_{v \in V\setminus\{a, b\}} \distg{G}{a}{v} \\
        &= \distg{G}{b}{b} + \distg{G}{a}{b} + \sum_{v \in V\setminus\{a, b\}} \distg{G}{a}{v} \\
        &\leq \distg{G}{b}{b} + \distg{G}{a}{b} + \sum_{v \in V\setminus\{a, b\}} (1 + \distg{G}{b}{v}) \\
        &= \distg{G}{b}{b} + \distg{G}{a}{b} + \sum_{v \in V\setminus\{a, b\}} \distg{G}{b}{v} + n - 2 \\
        &= \ccg{G}{b} + n - 2.
    \end{align*}
    Furthermore, it is trivially true that $\ccg{G}{b} \geq n - 1$. Thus, we may write $\ccg{G}{b} = n - 1 + x$ for some non-negative $x$.
    Then, making use of~\Cref{obs:adding-to-fractions},
    \[
        \ccrg{G}{a}{b} = \frac{\ccg{G}{b}}{\ccg{G}{a}} \geq \frac{\ccg{G}{b}}{\ccg{G}{b} + (n - 2)} = \frac{(n - 1) + x}{(n - 1) + x + (n - 2)} \geq \frac{n - 1}{2(n - 1) - 1} > \frac{1}{2},
    \]
    where in the second to last inequality we have made use of a fundamental observation:
    \begin{observation}\label{obs:adding-to-fractions}
        For all real numbers $w, p, q$ with $0 \leq w < p < q$, $\frac{p + w}{q + w} \geq \frac{p}{q}$.    
    \end{observation}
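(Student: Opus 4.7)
The plan is to prove this by a one-line cross-multiplication, relying only on the sign assumptions on $w$, $p$, and $q$. First I would note that the hypothesis $0 \leq w < p < q$ guarantees that both denominators $q$ and $q+w$ are strictly positive, so the inequality $\frac{p+w}{q+w} \geq \frac{p}{q}$ is equivalent (preserving the direction) to the polynomial inequality $q(p+w) \geq p(q+w)$.

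Next I would expand both sides to get $qp + qw \geq pq + pw$ and cancel the $pq$ terms, reducing the claim to $qw \geq pw$, i.e., $w(q-p) \geq 0$. This last inequality is immediate from $w \geq 0$ and $q - p > 0$, completing the proof.

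There is essentially no obstacle here; the observation is a purely algebraic fact and the only subtlety worth flagging is checking that the denominators are positive so that cross-multiplication is valid. I would keep the proof to two or three lines in the final write-up.
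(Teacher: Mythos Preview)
Your proof is correct. The paper does not actually spell out a proof of this observation; it only remarks that it ``is simple to verify via a proof by contradiction.'' Your direct cross-multiplication argument is at least as clean as any contradiction-based version and relies on exactly the same facts (positivity of $q$ and $q+w$, and $w(q-p)\geq 0$), so there is no meaningful difference in approach.
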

    \Cref{obs:adding-to-fractions} is simple to verify via a proof by contradiction, and will be used frequently throughout this paper.

Recalling that no ratio greater than $1$ is possible, a trivial~$\frac{1}{2}$-approximation for \CRI{} follows: add the edge $ab$ if it is not already present. This is also \emph{no better} than a $\frac{1}{2}$-approximation. Given $k \geq n-2$, an optimal algorithm achieves a ratio of $1$ for the example in Figure~\ref{2-apx}, but the edge $ab$ alone achieves only $\frac{1}{2} + o(1)$.
\begin{figure}
    \centering
    \begin{tikzpicture}[
        node/.style={circle, draw=black, minimum size=1cm},
        edge/.style={thick, -}
    ]
    
    \node[node, fill=blue!20] (A) at (0, 0) {$a$};
    \node[node, fill=yellow!60!orange] (B) at (2, 0) {$b$};
    
    \draw[edge, thick, red] (A) -- (B);
    
    \node[node] (X1) at (4, 1.5) {$x_1$};
    \node[node] (X2) at (4, 0.25) {$x_2$};
    \node at (4, -0.5) {$\vdots$}; 
    \node[node] (Xn) at (4, -1.5) {$x_{n-2}$};
    
    \draw[edge] (B) -- (X1);
    \draw[edge] (B) -- (X2);
    \draw[edge] (B) -- (Xn);
    
    \end{tikzpicture}
    \caption{The worst possible closeness ratio when $a$ and $b$ are adjacent.}\label{2-apx}
\end{figure}
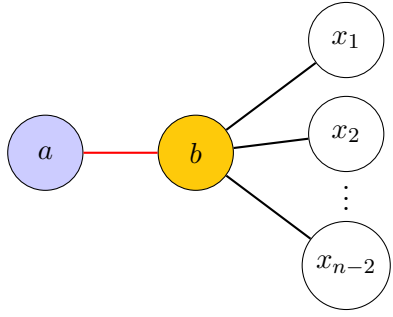
\section{Lower Bounds} \label{sec:hardness-section}

In this section, we establish the \cclass{NP}-hardness of \CRI{} by reducing from \SC{}.
First, we formally define the decision variant. 
\problemdef{\CRI{} (Decision Variant)}{A graph $G=(V,E)$, vertices $a,b\in V$, a budget $k\in\mathbb{N}$, and a target ratio $\tau\in(0,1]$.}
{Does there exist a set $T \subseteq V^2$ of size at most $k$ such that $\ccrg{G+T}{a}{b}\geq\tau$?}

We begin by giving a polynomial and parameter-preserving reduction when $\tau = 1$, which we will then generalize to obtain stronger lower bounds.

\begin{theorem} \label{t1-hard}
    \CRI{} is \cclass{NP}-hard and \cclass{W[2]}-hard\footnote{\cclass{W[2]}-hardness can be interpreted as very strong evidence that no fixed-parameter tractable algorithm exists for the parameter $k$. For an introduction to parameterized complexity, see the standard text~\cite{cygan2015parameterized}.} with respect to $k$, even when $\tau = 1$.
\end{theorem}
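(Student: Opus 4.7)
I would reduce from \SC, which is \cclass{NP}-hard and \cclass{W[2]}-hard with respect to the cover size. Given an instance $(U, \mathcal{F}, k)$ of \SC, I would construct a \CRI\ instance $(G, a, b, k)$ with $\tau = 1$ such that the original instance is a yes-instance iff there exists an edge set $T$ of size at most $k$ with $\ccrg{G+T}{a}{b} = 1$. Because the budget $k$ is unchanged across the reduction, the parameter is preserved and the \cclass{W[2]}-hardness of \SC\ transfers directly.

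The graph $G$ would contain an element vertex $v_i$ for each $e_i \in U$ and a set vertex $u_j$ for each $S_j \in \mathcal{F}$, with an edge $u_j v_i$ whenever $e_i \in S_j$. The vertex $a$ would be adjacent to every set vertex, so that $a$ sits at distance $1$ from each $u_j$ and distance $2$ from each $v_i$. The vertex $b$ would be attached through a small auxiliary gadget so that initially $b$ is at distance $2$ from every set vertex and distance $3$ from every element. In this layout, adding an edge $bu_j$ reduces $b$'s distance to $u_j$ by one and $b$'s distance to each element of $S_j$ not already within two of $b$ by one, while $\ccg{G}{a}$ is unchanged. I would then attach a controlled number of pendant vertices to calibrate $\ccg{G}{b} - \ccg{G}{a}$ to be exactly the total saving produced by a cover of size $k$, namely $k + |U|$, using a light scaling of the instance if needed to accommodate integrality of the padding.

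Given such a construction, the forward direction is direct: for a cover $\mathcal{C} \subseteq \mathcal{F}$ of size $k$, the edge set $\{bu_j : S_j \in \mathcal{C}\}$ closes the calibrated gap exactly and yields $\ccrg{G+T}{a}{b} = 1$.

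The main obstacle will be the reverse direction: showing that any edge set $T$ of size at most $k$ with $\ccrg{G+T}{a}{b} = 1$ must encode a set cover. The algorithm is free to add arbitrary edges---including $ab$, edges of the form $bv_i$, or edges entirely within the element/set side---so ruling out these alternatives requires careful accounting. My plan is a normal-form argument: transform any such $T$ into one using only edges of the form $bu_j$, without decreasing the ratio or increasing $|T|$. Each $bv_i$ is subsumed by any $bu_j$ with $e_i \in S_j$; each edge that reduces $\ccg{G+T}{a}$ either overshoots the exact target (driving the ratio strictly below $1$) or can be swapped for a $bu_j$ that is no less useful, with \Cref{obs:adding-to-fractions} handling the ratio comparison. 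Once $T$ is in normal form, exact gap-closure forces the selected set vertices to cover every element, yielding the desired cover.
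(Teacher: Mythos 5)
Your proposal follows essentially the same route as the paper: a parameter-preserving reduction from \SC{} in which $b$ trails $a$ by a calibrated gap equal to exactly the savings ($k+|U|$) obtainable by joining $b$ to the set vertices of a size-$k$ cover, followed by an exchange argument showing that arbitrary edge sets may be replaced by edges from $b$ to set vertices. The paper obtains the calibration for free by hanging an independent set of $n+k$ vertices on $a$ (with both $a$ and $b$ at distance $2$ from set vertices), rather than by pendant fine-tuning, but this difference is cosmetic; the crux in both cases is the normal-form/exchange step, which is the paper's \Cref{claim:star-on-b}.

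One step of your sketch uses the wrong invariant, however. You cannot in general put $T$ into normal form ``without decreasing the ratio'': a swap that strictly lowers $\ccg{G+T}{b}$ can push it below $\ccg{G+T}{a}$ and hence strictly below ratio $1$, so exact equality cannot be preserved under the exchanges. Similarly, the claim that edges helping $a$ ``overshoot the target and drive the ratio below $1$'' is not by itself an argument, since a solution could mix $a$-helping and $b$-helping edges to restore equality at a lower level. The paper avoids both issues by making the reverse direction one-sided: the exchange is only required never to \emph{increase} $\ccg{}{b}$, which yields a lower bound on the best achievable $\ccg{G+T}{b}$ over all $T$ of size at most $k$; when no size-$k$ cover exists this bound exceeds $\ccg{G}{a}$, and by monotonicity $\ccg{G+T}{a}\le\ccg{G}{a}$, so equality is impossible no matter how $T$ affects $a$. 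With that corrected invariant, your counting argument (each set-vertex edge saves $1$ plus the newly covered elements, each edge $bv_i$ saves $2$ and is dominated by a set-vertex edge, and total savings reach $k+|U|$ only if the chosen sets cover $U$) closes the proof exactly as in the paper; \Cref{obs:adding-to-fractions} is not actually needed in the exact-equality case.
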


\begin{proof} 
    Let $(U, S_1, S_2, \ldots, S_m, k)$ be an instance of \SC{}, where $U$ denotes a universe of $n$ elements and $\{S_j\}_{j = 1}^m$ is a set family over $U$.
    We construct an instance $(G = (V, E), a, b, k', \tau = 1)$ of \CRI{} as follows; see~\Cref{cri-instance} for a visual depiction.
    We create three vertices $a, b, c$, which induce a triangle. We create $n$ \emph{element vertices} $v_1, v_2, \ldots, v_n$ corresponding to the elements $u_1, u_2, \ldots, u_n$ of $U$, ordered arbitrarily.
    For each set $S_j$, we create a \emph{set vertex} $s_j$, as well as the edge $cs_j$ and the edges $s_jv_i$ for all $u_i \in S_j$. 
    Finally, we create $X = n + k$ vertices, and add an edge from each of these vertices to $a$. 
    We also set $k' = k$. This completes the construction.

    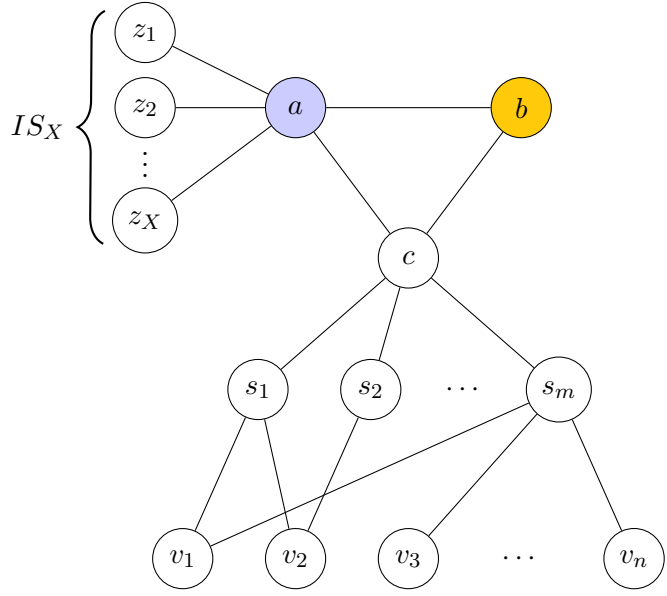
\begin{figure}
        \centering
        \begin{tikzpicture}[ 
            roundnode/.style={circle, draw=black, fill =white, minimum size=8mm},
            goldnode/.style={circle, draw=black, fill=red, minimum size=8mm}
            bluenode/.style={circle, draw=black, fill=red, minimum size=8mm},
        ]
    
        \node[roundnode, fill=blue!20] (a) at (-1.5,3) {$a$};
        \node[roundnode, fill=yellow!60!orange] (b) at (1.5,3) {$b$};
        \node[roundnode] (c) at (0,1) {$c$};
        \node[roundnode] (IS1) at (-3.5,4) {$z_1$};
        \node[roundnode] (IS2) at (-3.5,3) {$z_2$};
        \node (dotsIS) at (-3.5,2.35) {$\vdots$};
        \node[roundnode] (ISX) at (-3.5,1.5) {$z_X$};
        \node[roundnode] (S1) at (-2,-0.75) {$s_1$};
        \node[roundnode] (S2) at (-.5,-0.75) {$s_2$};
        \node (dotsS) at (.75,-0.75) {$\cdots$};
        \node[roundnode] (Sm) at (2,-0.75) {$s_m$};
        \node[roundnode] (E1) at (-3,-3) {$v_1$};
        \node[roundnode] (E2) at (-1.5,-3) {$v_2$};
        \node[roundnode] (E3) at (0,-3) {$v_3$};
        \node (dotsE) at (1.5,-3) {$\cdots$};
        \node[roundnode] (En) at (3,-3) {$v_n$};
        \draw[-] (a) -- (b);
        \draw[-] (a) -- (c);
        \draw[-] (b) -- (c);
        \draw[-] (c) -- (S1);
        \draw[-] (c) -- (S2);
        \draw[-] (c) -- (Sm);
        \draw[-] (S1) -- (E1);
        \draw[-] (S1) -- (E2);
        \draw[-] (Sm) -- (E3);
        \draw[-] (S2) -- (E2);
        \draw[-] (Sm) -- (E1);
        \draw[-] (Sm) -- (En);
        \draw[-] (a) -- (IS1);
        \draw[-] (a) -- (IS2);
        \draw[-] (a) -- (ISX);
        \draw[decorate, decoration={brace, amplitude=10pt}, thick]
        ([xshift=-7pt]ISX.south west) -- ([xshift=-7pt]IS1.north west)
        node[midway, xshift=-25pt] {$IS_X$};

        \end{tikzpicture}
        \caption{The construction given by~\Cref{t1-hard}. We denote set vertices by $s_j$ and element vertices by $v_i$. Here, $IS_X$ is an independent set of $X$ vertices, with each vertex adjacent to $a$. We refer to the proof of~\Cref{t1-hard} for a formal description of the construction and the accompanying analysis.}
        \label{cri-instance}
        \end{figure}

    We claim that there exists a set cover of $U$ of size at most $k$ if and only if there is a set $T$ of at most $k$ edges such that $\ccrg{G+T}{a}{b}=1$.
    For the forward direction, suppose that there exists a cover $C \subseteq \{S_j\}_{j=1}^m$ of size $k$. Let $T = \{bs_j \colon S_j \in C\}$, i.e., the set of edges from $b$ to the set vertices corresponding to sets in the cover $C$.
    Then for each element vertex $v_i$, $\distg{G+T}{v_i}{b} = 2$, and $\distg{G+T}{v_i}{a} = 3$.
    Also, $b$ is adjacent to $k$ set vertices, so $\sum_{j = 1}^m \distg{G+T}{s_j}{b} = 2m - k$, while $\sum_{j = 1}^m \distg{G+T}{s_j}{a} = 2m$.
    Then
    \[\ccg{G+T}{b} = 2X + 1 + 1 + 2m - k + 2n = 2(n+k) + 2m - k + 2n + 2 = 4n + 2m + k + 2.\]
    Similarly,
      $\ccg{G+T}{a} = X + 1 + 1 + 2m + 3n = 4n + 2m + k + 2$,  
    so $\ccg{G+T}{b} = \ccg{G+T}{a}$, implying $\ccrg{G+T}{a}{b} = 1 = \tau$, as desired.

    For the reverse direction, assume that there is no collection of $k$ sets which covers $U$.
    We must prove that there is no set $T \subseteq V^2 \setminus E$ of size at most $k$ such that $\ccg{G+T}{b} = \ccg{G+T}{a}$.
    We accomplish this by showing that for every set $T$ of size at most $k$,  $\ccg{G+T}{b} > \ccg{G}{a}$.
    This is sufficient to complete the proof, since it follows from monotonicity that $\ccg{G+T}{a} \leq \ccg{G}{a}$.
    Toward this end, let $T^*$ be a set of (at most) $k$ edge additions such that for all sets $T \subseteq V^2 \setminus E$ of size at most $k$, $\ccg{G+T^*}{b} \leq \ccg{G+T}{b}$.
    We now state prove a useful structural lemma, which will allow us to assume that all edges in $T^*$ connect $b$ to set vertices.

    \begin{restatable}{claim}{setvertexincidentclaim} \label{claim:star-on-b}
        There exists a set $T \subseteq V^2 \setminus E$ of size at most $k$ such that every edge in $T$ is incident on $b$, all other endpoints of edges in $T$ are set vertices, and $\ccg{G+T}{b} \leq \ccg{G+T^*}{b}$.
    \end{restatable}
    \begin{proof}
        We begin by creating a set $T'$ as follows. For each edge in $T^*$ that is incident on $b$, include it in $T'$. For each edge $xy\in T^*$ that is not incident on $b$, suppose without loss of generality that $\distg{G+T^*\setminus\{xy\}}{b}{x}\leq\distg{G+T^*\setminus\{xy\}}{b}{y}$, and then include the edge $by$ in $T'$. Now $|T'|\leq k$ and all edges in $T$ are incident on $b$.
    
        Let $v \in V$, and let $P$ be a shortest path from $b$ to $v$ in $G + T^*$. If $P$ uses no edges from $T^*$, then $\distg{G+T'}{b}{v} \leq \distg{G+T^*}{b}{v}$. Otherwise, let $xy$ be the last edge from $T^*$ appearing along $P$ ($x$ may be $b$), so $P$ has length $\distg{G+T^*\setminus\{xy\}}{b}{x} + 1 + \distg{G}{y}{v}$.
        Then $\distg{G+T^*\setminus\{xy\}}{b}{y} > \distg{G+T^*\setminus\{xy\}}{b}{x}$, as otherwise there exists a path from $b$ to $v$ in $G+T^*$ with length at most $\distg{G+T^*\setminus\{xy\}}{b}{y} + \distg{G}{y}{v}$ which is strictly less than the length of $P$.
        Then by construction, the edge $by$ is in $T'$, so $\distg{G+T'}{b}{v} \leq \distg{G+T^*}{b}{v}$.
    
        We now use a second exchange argument on $T'$ to construct a set of at most $k$ edges $T$ such that every edge in $T$ is incident on $b$, all other endpoints of edges in $T$ are set vertices, and $\ccg{G+T}{b}\leq\ccg{G+T'}{b}\leq\ccg{G+T^*}{b}$.
        Let $bz_i$ be an edge (if one exists) in $T'$ which is incident on a $z_i$ in the independent set. Then $\ccg{G+T'}{b} = \ccg{G+T'\setminus\{bz_i\}}{b} - 1$. We may then replace $bz_i$ with $bs_j$, where $s_j$ is any set vertex not already an endpoint of an edge in $T'$, and achieve the same (or better) closeness centrality. Hence, we may assume that
        all edges in $T'$ are incident on $b$ and all other endpoints are either set or element vertices.
    
        Suppose that there are $k' \leq k$ element vertices incident on edges in $T'$. If $k' = 0$ then by setting $T = T'$ we are done.
        We induct on $k'$, assuming that any $T''$ with $k' - 1$ edges incident on element vertices may be transformed into the desired set $T$.
        Let $v_{i_1}$ be an element vertex which is an endpoint of an edge in $T'$, and let $C'$ be the set vertices which are endpoints of edges in $T'$.
        Since by hypothesis there is no set cover of size $k$, there exists an element $u_{i_2}$ which is not covered by the sets corresponding to $C'$ (it is possible that $i_1 = i_2$).
        Let $S_j$ be any set covering $u_{i_2}$, and set $T'' = (T' \setminus \{bv_{i_1}\}) \cup \{bs_j\}$.
        Then $\distg{G+T''}{b}{s_j} = \distg{G+T'}{b}{s_j} - 1$. If $i_1 = i_2$, then $\distg{G+T''}{b}{v_{i_1}} = \distg{G+T'}{b}{v_{i_1}} + 1$ and $\distg{G+T''}{b}{v} \leq \distg{G+T'}{b}{v}$ for all remaining vertices $v \notin\{v_{i_1}, s_j\}$.
        Otherwise, $\distg{G+T''}{b}{v_{i_1}} \leq \distg{G+T'}{b}{v_{i_1}} + 2$, $\distg{G+T''}{b}{v_{i_2}} = \distg{G+T'}{b}{v_{i_2}} - 1$, and $\distg{G+T''}{b}{v} \leq \distg{G+T'}{b}{v}$ for all remaining vertices $v \notin\{v_{i_1},v_{i_2}, s_j\}$.
        In either case, summing shows that $\ccg{G+T''}{b} \leq \ccg{G+T'}{b}$. The result now follows from the inductive hypothesis.
    \end{proof}

    \Cref{claim:star-on-b} allows us to give a useful lower bound on $\ccg{G+T^*}{b}$.
    We use the fact that the sets associated with the endpoints of edges in $T^*$ are (by hypothesis) \emph{not} a cover of $U$ to
    conclude that there exists at least one element vertex $v_i$ with $\distg{G+T^*}{b}{v_i} = 3$.
    Then,
    \begin{align*}
        \ccg{G+T^*}{b} &\geq 2X + 2 + 2m - k + 2(n - 1) + 3 = 2(n+k) + 2 + 2m - k + 2(n-1) + 3 \\
        &= 4n + 2m + k + 3
        > 4n + 2m + k + 2 = \ccg{G}{a},        
    \end{align*}
    as desired.
\end{proof}
\noindent
Observe that a target ratio of $1$ for \CRI{} corresponds exactly to a target gap of $0$ in the context of \CGM{}.
Thus,~\Cref{t1-hard} also (conditionally) rules out any multiplicative approximation for the latter problem.

\begin{restatable}{corollary}{gapminimizationhard}\label{cor:gap-minimization-hard}
    \CGM{} is \cclass{NP}-hard, \cclass{W[2]}-hard with respect to $k$, and (unless \cclass{P} = \cclass{NP}) admits no polynomial-time multiplicative approximation algorithm.
\end{restatable}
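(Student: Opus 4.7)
The plan is to exploit the observation highlighted just before the corollary: for any edge set $T$, we have $\ccrg{G+T}{a}{b} = 1$ if and only if $|\ccg{G+T}{a} - \ccg{G+T}{b}| = 0$. Thus, the very same reduction constructed in \Cref{t1-hard} can be reused verbatim, with the decision target simply relabeled from ``ratio $\geq 1$'' to ``gap $= 0$''. This immediately yields the \cclass{NP}-hardness and \cclass{W[2]}-hardness with respect to $k$, since the reduction was already polynomial-time and parameter-preserving, and the equivalence between the set cover instance and the existence of a zero-gap solution is exactly what was proved.

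For the inapproximability portion, the key idea is that any multiplicative guarantee collapses when the optimum is $0$: if $\opt = 0$, then an $\alpha$-approximation must return a solution with gap at most $\alpha \cdot 0 = 0$. So suppose, for contradiction, that some polynomial-time algorithm $\mathcal{A}$ achieves a finite multiplicative approximation ratio $\alpha \geq 1$ for \CGM{}. Given any \SC{} instance, I would run the reduction of \Cref{t1-hard} to obtain a \CGM{} instance and then invoke $\mathcal{A}$. On a YES-instance of \SC{}, the optimum gap is $0$, so $\mathcal{A}$ must return a solution of gap exactly $0$. On a NO-instance, the proof of \Cref{t1-hard} shows that $\ccg{G+T}{b} > \ccg{G}{a} \geq \ccg{G+T}{a}$ for every admissible $T$, so the optimum gap is a positive integer and $\mathcal{A}$ returns a positive value. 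Testing whether the returned gap is zero then decides \SC{} in polynomial time, forcing $\cclass{P} = \cclass{NP}$.

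I do not anticipate a serious obstacle here; the corollary is essentially a packaging of \Cref{t1-hard} together with the standard ``multiplicative approximations cannot separate $0$ from positive'' argument. The only mild subtlety is making sure that in the NO-instance case the optimum gap is \emph{strictly} positive (which the proof of \Cref{t1-hard} already establishes by showing $\ccg{G+T^*}{b} \geq \ccg{G}{a} + 1$), and that closeness centralities are integer-valued so no pathological $\opt \in (0,1)$ case arises. With those points in hand, the three claims of the corollary follow directly.
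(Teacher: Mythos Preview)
Your proposal is correct and follows essentially the same approach as the paper: reuse the reduction of \Cref{t1-hard} verbatim, observe that ratio $=1$ iff gap $=0$ to transfer both \cclass{NP}- and \cclass{W[2]}-hardness, and then note that in the NO case the gap is at least $1$ while in the YES case it is $0$, so any multiplicative approximation would decide \SC{}. The paper spells out the closeness computations again rather than invoking the ratio/gap equivalence directly, but the content is identical.
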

\begin{proof}
    Using the same construction as the proof of Theorem~\ref{t1-hard} and Figure~\ref{cri-instance}, we claim that there exists a set cover of $U$ of size at most $k$ if and only if there is a set $T$ of at most $k$ edges such that $|\ccg{G+T}{a}-\ccg{G+T}{b}|=0$. \\\\
    Suppose there is a set cover $C$ of $U$ of size at most $k$, then let $T = \{bs_j \colon S_j \in C\}$. Now, $\ccg{G+T}{a}=(n+k)+2+2m+3n=4n+2m+k+2$ and $\ccg{G+T}{b}=2(n+k)+2+2m-k+2n=4n+2m+k+2$, meaning $|\ccg{G+T}{a}-\ccg{G+T}{b}|=0$. Suppose instead that there is no collection of $k$ sets which covers $U$. We claim that for every set $T$ of size at most $k$,  $\ccg{G+T}{b} > \ccg{G}{a}\geq \ccg{G+T}{a}$. Let $T^*$ be a set of (at most) $k$ edge additions such that for all sets $T \subseteq V^2 \setminus E$ of size at most $k$, $\ccg{G+T^*}{b} \leq \ccg{G+T}{b}$. Then by the same argument used in the proof of~\Cref{t1-hard}, $\ccg{G+T^*}{b}\geq 2(n+k)+2+2m-k+2(n-1)+3(1)=4n+2m+k+3$, while $\ccg{G+T^*}{a}\leq \ccg{G}{a} = 4n+2m+k+2$. Consequently, $|\ccg{G+T}{a}-\ccg{G+T}{b}|\geq 1$ for all sets $T \subseteq V^2 \setminus E$ of size at most $k$. \\\\
    Since \CGM{} is \cclass{NP}-hard at parameter value 0, it admits no polynomial-time multiplicative approximation algorithm. Suppose that such an algorithm $\mathcal{A}$ did exist, achieving a gap at most $\alpha\cdot\opt(y)$ for some $\alpha$ possibly a function of the input $y$. Then consider an instance of \SC{} $x$ reduced (as we've described) to an instance $y$ of~\CGM{}. We've shown that if $x$ has a set cover, then $\opt(y)=0$, so $\mathcal{A}(y)\leq\alpha\cdot 0=0$. If $x$ has no set cover, then $\mathcal{A}(y)\geq \opt(y)\geq 1$. Therefore, $\mathcal{A}$ decides \SC{} in polynomial time.
\end{proof}

In \Cref{sec:half-approx}, we observed that \CRI{} is trivial when $\tau \leq \frac{1}{2}$.
Meanwhile, by~\Cref{t1-hard}, the problem is hard when $\tau = 1$.
The natural question is whether we can find the barrier between tractability and \cclass{NP}-hardness.
We now resolve this question by showing that \CRI{} is \cclass{NP}-hard for every $\tau > \frac{1}{2}$.
The construction is a generalization of that given by~\Cref{t1-hard}; we use the value of $\tau$ to
determine an appropriate size for the independent set adjacent to $a$, as well as to determine an appropriate number of vertices corresponding to each element in the \SC{} instance. 

\begin{restatable}{theorem}{nphardtaulessthanone} \label{t2-hard}
    \CRI{} is \cclass{NP}-hard and \cclass{W[2]}-hard with respect to $k$, even when restricted to any constant $\tau > \frac{1}{2}$.
\end{restatable}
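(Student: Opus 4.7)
The plan is to generalize the \SC{}-based reduction of \Cref{t1-hard} by keeping the same skeleton (triangle on $a, b, c$, set vertices adjacent to $c$, and each element vertex adjacent to its containing set vertices) but choosing the size $X$ of the independent set attached to $a$ as a specific function of the target ratio $\tau$. The edge budget parameter $k' = k$ is preserved, so \cclass{W[2]}-hardness with respect to $k$ will again follow for free.

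The core algebraic idea is that adding the $k$ cover edges $T_C = \{bs_j : S_j \in C\}$ reduces $\ccg{G}{b}$ by exactly $k + n$ (each cover set vertex moves from distance $2$ to $1$, and each element vertex from distance $3$ to $2$) while leaving $\ccg{G}{a}$ unchanged, so the resulting ratio is $(X + 2m + 3n + 2)/(2X + 2m - k + 2n + 2)$, assuming $b$ remains less central (which holds for $X \geq n + k$). Requiring this ratio to be at least $\tau$ yields an upper bound $X \leq U_\tau$; requiring the ratio to drop strictly below $\tau$ as soon as one element is uncovered (which adds at least $1$ to $\ccg{G+T}{b}$) yields a matching lower bound $X > L_\tau$. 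A short calculation gives $U_\tau - L_\tau = \tau/(2\tau - 1) \geq 1$ for all $\tau \in (1/2, 1]$, so the half-open interval $(L_\tau, U_\tau]$ always contains an integer; I would set $X := \lfloor U_\tau \rfloor$ and verify that $U_\tau \geq n + k$ (equivalent to the trivial inequality $(1-\tau)(2m + 4n + k + 2) \geq 0$), ensuring $X \geq n + k$ as required.

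The forward direction is then a direct calculation: $T_C$ achieves ratio at least $\tau$ by the definition of $U_\tau$. For the reverse direction, I would invoke \Cref{claim:star-on-b}---whose proof depends only on the graph structure, not on $\tau$---to reduce to the case where $T$ is a star on $b$ with set-vertex leaves, since such $T$ minimizes $\ccg{G+T}{b}$. Since these set-vertices do not cover $U$, at least one element vertex remains at distance $3$ from $b$, giving $\ccg{G+T}{b} \geq 2X + 2m - k + 2n + 3$; combining with the monotonicity bound $\ccg{G+T}{a} \leq \ccg{G}{a}$ and invoking the definition of $L_\tau$ drives the ratio strictly below $\tau$ for every $T$ of size at most $k$.

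The main obstacle is not any single step but the simultaneous bookkeeping: threading together the integer choice of $X$ in $(L_\tau, U_\tau]$, the side condition $X \geq n + k$ (which collapses what would otherwise be a two-case analysis on which of $a, b$ is more central after adding $T$), and the \Cref{claim:star-on-b} reduction, all uniformly over every constant $\tau \in (1/2, 1]$. Once these fit together the rest is a mechanical generalization of the $\tau = 1$ argument.
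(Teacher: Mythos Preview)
Your approach is essentially the paper's: the same construction, the same interval $(L_\tau, U_\tau]$ of length $\tau/(2\tau-1)$ for $X$, and the same use of \Cref{claim:star-on-b} in the reverse direction. The one substantive difference is that you pin down $X = \lfloor U_\tau \rfloor$ and observe that this forces $X \geq n+k$, hence $\ccg{G+T_C}{b} \geq \ccg{G+T_C}{a}$ after adding the cover edges. This is a genuine simplification: the paper allows any integer $X$ in the interval and must therefore handle the possibility that adding the cover edges makes $b$ \emph{more} central than $a$, which it does via an element-duplication preprocessing step (padding $n$ so that $\frac{2m+4n+k}{1+2m+4n+k} \geq \tau^2$) followed by a second case in the forward direction. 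Your observation that $U_\tau \geq n+k$ is equivalent to $(1-\tau)(2m+4n+k+2) \geq 0$ collapses both of these; otherwise the two arguments are identical.
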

\noindent
\begin{proof}
    Let $\tau\in(\frac{1}{2},1)$, and note that~\Cref{t1-hard} covers the case where $\tau = 1$. Consider an instance of \SC{} with $m$ sets, $n$ elements, and $k\in\mathbb{N}$.
    We first claim that we may assume $\frac{2m+4n+k}{1+2m+4n+k}\geq\tau^2$. Otherwise, we create an equivalent instance of \SC{} with $\lceil\frac{\tau^2}{4n(1 - \tau^2)}\rceil$ copies of each element, giving each copy the same set memberships as the original element.\footnote{Conceptually, one may also think of this procedure as the introduction of $\lceil\frac{\tau^2}{4n(1 - \tau^2)}\rceil$ twins for each element vertex in our constructed \CRI{} instance.}
    Then, letting $n'$ denote the number of elements in our equivalent instance, we have
    \begin{align*}
        n' &\geq \frac{\tau^2}{4(1 - \tau^2)} \\
        4n'(1 - \tau^2) &\geq \tau^2 > \tau^2 + 2m(\tau^2 - 1) + k(\tau^2 - 1) \\
        \frac{2m + 4n' + k}{1 + 2m + 4n' + k} &\geq \tau^2,
    \end{align*}
    so our assumption is safe. We now repeat the same construction given by~\Cref{t1-hard} and depicted in~\Cref{cri-instance}.
    This time, we let the number $X$ of vertices in the independent set adjacent to $a$ be an integer in the interval given by the
    following lemma. We prove that a suitable value always exists.

    \begin{restatable}{claim}{integerintervalclaim}\label{clm:integer}
        For every $m,n,k\in\mathbb{N}$ and $\tau\in\left(\frac{1}{2},1\right)$, there exists a natural number $X$ in the interval $\left(\frac{2+2m+3n-\tau(2+2m-k+2n+1)}{2\tau-1},\frac{2+2m+3n-\tau(2+2m-k+2n)}{2\tau-1}\right]$.
    \end{restatable}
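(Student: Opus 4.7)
The plan is to show the interval has length strictly greater than $1$ (forcing an integer to lie inside), and then verify that this integer is non-negative, hence a natural number.

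First, writing $b$ for the upper and $a$ for the lower endpoint of the interval, a direct subtraction gives $b-a=\frac{\tau}{2\tau-1}$. For $\tau\in\left(\frac{1}{2},1\right)$ this strictly exceeds $1$ (since $\tau>2\tau-1$ is equivalent to $\tau<1$). Any half-open interval of length greater than $1$ contains an integer; in particular, $\lfloor b\rfloor\in(a,b]$ because $\lfloor b\rfloor\le b$ and $\lfloor b\rfloor>b-1>a$.

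Second, I would show $b>0$, which implies $\lfloor b\rfloor\ge 0$. Since $2\tau-1>0$, it suffices to show the numerator $(2+2m+3n)-\tau(2+2m+2n-k)$ is positive. I split on the sign of $2+2m+2n-k$: if it is non-positive, the numerator is at least $2+2m+3n>0$; otherwise, multiplying by $\tau<1$ shows the numerator strictly exceeds $(2+2m+3n)-(2+2m+2n-k)=n+k\ge 0$.

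Finally, if $b\ge 1$ then $\lfloor b\rfloor\ge 1$ is the desired natural number. Otherwise $b\in(0,1)$, and the length bound gives $a=b-\frac{\tau}{2\tau-1}<b-1<0$, so $0\in(a,b]$ qualifies. In either case the interval contains a natural number.

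The argument is essentially mechanical once the identity $b-a=\frac{\tau}{2\tau-1}$ is recognized. The main nuisance is the case split needed to show the upper endpoint is non-negative, which ensures that the integer guaranteed by the length bound is actually a natural number rather than a negative integer.
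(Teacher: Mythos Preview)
Your proof is correct and follows essentially the same approach as the paper: compute the interval length as $\frac{\tau}{2\tau-1}>1$ and conclude an integer lies inside. The paper simply asserts that both endpoints are positive for $m,n,k\in\mathbb{N}$, whereas you give a more careful case analysis showing the upper endpoint is positive and explicitly handle the possibility that the resulting integer is $0$; this extra care is the only real difference.
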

    \begin{proof}
        Taking the difference of the bounds of the interval, we get that the length of the interval is $\frac{\tau}{2\tau-1}$. If $\tau<1$, then $-\tau>-1$ and thus $\tau>2\tau-1$. Thus the length of the interval is greater than 1 when $\tau\in(\frac{1}{2},1)$, so there must exist some integer $X$ within this interval. Furthermore, for $m,n,k\in\mathbb{N}$ and $\tau\in(\frac{1}{2},1)$, the bounds of this interval are  both positive, so $X$ must be a natural number, as desired.
    \end{proof}

    If there exists a set cover $C$ of size $k$, then we add $k$ edges from $b$ to the set vertices corresponding to $C$.
    In the analysis, some care must be taken to handle the case in which these edge additions cause $b$'s closeness centrality to become lower than that of $a$.
    For the other direction, we will use our selected parameter values to show that for every set $T$ of $k$ edge additions,
    $\ccg{G+T}{b} > \frac{1}{\tau}\cdot\ccg{G}{a} \geq \frac{1}{\tau}\cdot\ccg{G+T}{a}$.
    The details are as follows.
    
    For the forward direction, suppose that there exists a cover $C \subseteq \{S_j\}_{j=1}^m$ of size $k$. Again, let $T = \{bs_j \colon S_j \in C\}$, the set of edges from $b$ to the set vertices corresponding to sets in the cover $C$. Now, $\ccg{G+T}{a}=X+2+2m+3n$ and $\ccg{G+T}{b}=2X+2+2m-k+2n$. We want to show that $\ccrg{G+T}{a}{b}\geq\tau$, but we must still ensure that $\ccrg{G+T}{a}{b}\leq 1$, or else we violate our min/max definition of closeness ratio. Note that this was not a concern for $\tau=1$, as we showed our ratio was exactly $1$ when there was a set cover. Thus we consider the case where $\ccg{G+T}{b}\geq \ccg{G+T}{a}$ and the case where $\ccg{G+T}{b}<\ccg{G+T}{a}$, and show that in both cases $\ccrg{G+T}{a}{b}\geq\tau$.\\\\
    If $\ccg{G+T}{b}\geq \ccg{G+T}{a}$, then $\ccrg{G+T}{a}{b}=\frac{X+2+2m+3n}{2X+2+2m-k+2n}$. Observe that this ratio is greater than $\frac{1}{2}$, and as $X\to\infty$, $\ccrg{G+T}{a}{b}\to\frac{1}{2}$. Thus $\ccrg{G+T}{a}{b}$ is a decreasing function of $X$ and using~\Cref{clm:integer}, we have
    \[\ccrg{G+T}{a}{b}\geq\frac{(\frac{2+2m+3n-\tau(2+2m-k+2n)}{2\tau-1})+2+2m+3n}{2(\frac{2+2m+3n-\tau(2+2m-k+2n)}{2\tau-1})+2+2m-k+2n}\]
    \[=\frac{2+2m+3n-\tau(2+2m-k+2n)+(2\tau-1)(2+2m+3n)}{2(2+2m+3n)-2\tau(2+2m-k+2n)+(2\tau-1)(2+2m-k+2n)}\]
    \[=\frac{2\tau(2+2m+3n)-\tau(2+2m-k+2n)}{2(2+2m+3n)-2+2m-k+2n}=\tau.\]\\\\
    Alternatively, if $\ccg{G+T}{b}<\ccg{G+T}{a}$, then $\ccrg{G+T}{a}{b}=\frac{2X+2+2m-k+2n}{X+2+2m+3n}$. Observe that this ratio is now an \emph{increasing} function of $X$ and using~\Cref{clm:integer} we have,
    \[\ccrg{G+T}{a}{b}>\frac{2(\frac{2+2m+3n-\tau(2+2m-k+2n+1)}{2\tau-1})+2+2m-k+2n}{(\frac{2+2m+3n-\tau(2+2m-k+2n+1)}{2\tau-1})+2+2m+3n}\]
    \[=\frac{2(2+2m+3n)-2\tau(2+2m-k+2n+1)+(2\tau-1)(2+2m-k+2n)}{(2+2m+3n)-\tau(2+2m-k+2n+1)+(2\tau-1)(2+2m+3n)}\]
    \[=\frac{2(2+2m+3n)-2\tau-(2+2m-k+2n)}{2\tau(2+2m+3n)-\tau(2+2m-k+2n+1)}\]
    \[=\frac{4+4m+6n-2\tau-2-2m+k-2n}{4\tau+4m\tau+6n\tau-2\tau-2m\tau+k\tau-2n\tau-\tau}=\frac{(2-2\tau)+2m+4n+k}{\tau+2m\tau+4n\tau+k\tau}.\]
    As $\tau\in(\frac{1}{2},1)$, $2-2\tau>0$, so
    \[\frac{(2-2\tau)+2m+4n+k}{\tau+2m\tau+4n\tau+k\tau}>\frac{2m+4n+k}{\tau+2m\tau+4n\tau+k\tau}=\frac{1}{\tau}(\frac{2m+4n+k}{1+2m+4n+k}).\]
    We have already shown that we may assume $\frac{2m+4n+k}{1+2m+4n+k}\geq\tau^2$. Thus we have that,
    \[\ccrg{G+T}{a}{b}>\frac{1}{\tau}(\frac{2m+4n+k}{1+2m+4n+k})\geq\frac{1}{\tau}(\tau^2)=\tau.\]
    Thus in either case, if there is a set cover of $U$ of size $k$, we can add $k$ edges to $G$ to get a closeness ratio greater than or equal to $\tau$.
    
    For the reverse direction, suppose that there is no collection of $k$ sets which covers $U$. We must prove that there is no set $T \subseteq V^2 \setminus E$ of size at most $k$ such that $\ccrg{G+T}{a}{b}\geq\tau$. Let $T^*$ be a set of (at most) $k$ edge additions such that for all sets $T \subseteq V^2 \setminus E$ of size at most $k$, $\ccg{G+T^*}{b} \leq \ccg{G+T}{b}$. Using the same analysis from the proof of Theorem~\ref{t1-hard}, $\ccg{G+T^*}{b}\geq 2X+2+2m-k+2n+1$, while $\ccg{G+T^*}{a}\leq X+2+2m+3n$. Note that our choice of $X$ (\Cref{clm:integer}) guarantees that $\ccg{G+T^*}{a}\leq \ccg{G+T^*}{b}$, so we now have $\ccrg{G+T^*}{a}{b}\leq\frac{X+2+2m+3n}{2X+2+2m-k+2n+1}$. This ratio is greater than $\frac{1}{2}$, and as $X\to\infty$, $\ccrg{G+T^*}{a}{b}\to\frac{1}{2}$. Thus $\ccrg{G+T^*}{a}{b}$ is a decreasing function of $X$ and we have,
   
    \[\ccrg{G+T^*}{a}{b}<\frac{(\frac{2+2m+3n-\tau(2+2m-k+2n+1)}{2\tau-1})+2+2m+3n}{2(\frac{2+2m+3n-\tau(2+2m-k+2n+1)}{2\tau-1})+2+2m-k+2n+1}\]
    \[=\frac{(2+2m+3n)-\tau(2+2m-k+2n+1)+(2\tau-1)(2+2m+3n)}{2(2+2m+3n)-2\tau(2+2m-k+2n+1)+(2\tau-1)(2+2m-k+2n+1)}\]
    \[=\frac{2\tau(2+2m+3n)-\tau(2+2m-k+2n+1)}{2(2+2m+3n)-(2+2m-k+2n+1)}=\tau.\]
\end{proof}

We conclude this section by giving a bicriteria inapproximability result for \CRI{}. 
We will once again reduce from~\SC{}, but we make use of a stronger assumption due to~\cite{feige1998threshold,feige2004approximating,feige2010submodular}. Let $c \geq 1$ be a constant. Then given a family of $m$ sets over $n$ elements, it is \cclass{NP}-hard to distinguish between
the existence of a set cover of size $k$, and the non-existence of any collection of $ck$ sets which cover more than $(1 - (1 - \frac{1}{k})^{ck} + \delta)n$ elements, where $\delta > 0$ is any constant sufficiently small to keep the term in parenthesis less than $1$.
In particular, we may assume that if there is no set cover of size $k$, then every collection of $ck$ sets covers at most $(1 - \frac{1}{e^c} + \delta)n$ elements, for arbitrarily small positive $\delta$. 
The construction and analysis are similar to~\Cref{t1-hard}.

\begin{restatable}{theorem}{approxhardness}\label{thm:inapprox}
    For every $\varepsilon > 0, c \geq 1$, unless $\cclass{P} = \cclass{NP}$ there is no polynomial-time algorithm which takes as input an instance $(G = (V, E), a, b, k)$ of \CRI{} with optimal objective value $\opt$ and returns a set $T \subseteq V^2\setminus E$ of size at most $ck$ while guaranteeing that $\ccrg{G+T}{a}{b} \geq \frac{5e^c}{5e^c + 1 - \varepsilon}\cdot\opt$.
\end{restatable}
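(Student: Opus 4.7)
The plan is to reduce from Feige's strong hardness for \SC: for any constants $c \geq 1$ and $\delta > 0$ sufficiently small, it is \cclass{NP}-hard to distinguish between \SC\ instances admitting a cover of size $k$ and those where no collection of $ck$ sets covers more than $(1 - 1/e^c + \delta)n$ elements. From such an instance, I would construct the same graph as in \Cref{t1-hard} (triangle on $a,b,c$, set vertices attached to $c$, element vertices attached to their containing sets, independent set of size $X$ attached to $a$), but with two calibrations designed to push the achievable bound up to $\frac{5e^c}{5e^c+1-\varepsilon}$. First, replace each element $u_i$ by $t$ twin copies inheriting $u_i$'s set memberships, so that $n' := tn$ element vertices populate the instance. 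Second, set $X = \lfloor 3n'/2 \rfloor$ rather than $n+k$. The parameter $t$ will be taken large so that the $O(m+ck)$ lower-order terms become negligible relative to $n'$.

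In the YES case, adding the $k$ edges $T = \{bs_j : S_j \in C\}$ for a size-$k$ cover $C$ yields $\ccg{G+T}{a} = X + 3n' + O(m)$ and $\ccg{G+T}{b} = 2X + 2n' + O(m+k)$, so with $X \approx 3n'/2$ the latter dominates and $\tau_Y := \ccrg{G+T}{a}{b} \to 9/10$ as $t \to \infty$. In the NO case, \Cref{claim:star-on-b} adapts verbatim from budget $k$ to budget $ck$: the Feige guarantee implies that no collection of at most $ck$ sets covers $U$, so the inductive swap in the claim always finds an uncovered element. One may thus assume an optimal $T^*$ with $|T^*| \leq ck$ consists entirely of edges from $b$ to set vertices; by Feige, at least $(1/e^c - \delta) n'$ element copies remain at distance $3$ from $b$, so $\ccg{G+T^*}{b} \geq 2X + 2n' + (1/e^c-\delta)n' + O(m+ck)$, yielding $\tau_N \to (9/2)/(5 + 1/e^c - \delta) = 9e^c/(10e^c + 2 - 2\delta e^c)$.

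Combining the two cases, $\tau_N / \tau_Y \to 10e^c/(10e^c + 2 - 2\delta e^c) = 5e^c/(5e^c + 1 - \delta e^c)$. Choosing $\delta < \varepsilon / e^c$ (say $\delta = \varepsilon / (2e^c)$) and taking $t$ large enough to suppress the $o(1)$ terms, any polynomial-time algorithm guaranteeing $\ccrg{G+T}{a}{b} \geq \frac{5e^c}{5e^c+1-\varepsilon} \cdot \opt$ would produce in YES instances a ratio strictly above the best achievable in NO instances, thereby distinguishing them and contradicting the Feige-style \cclass{NP}-hardness. The principal technical obstacle is the joint calibration of $X = \lfloor 3n'/2\rfloor$ (which produces the $5$-coefficient rather than the $4$ that naturally arises from the $X = n+k$ choice of \Cref{t1-hard}) and $t$ (taken even so that $X$ is integral, and large enough to absorb the error from the $\delta e^c$ versus $\varepsilon$ slack); once these parameters are in hand, the structural adaptation of \Cref{claim:star-on-b} and the forward arithmetic proceed analogously to \Cref{t1-hard}.
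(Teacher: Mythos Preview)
Your approach is correct and follows the same blueprint as the paper: reduce from Feige's gap version of \SC, reuse the gadget of \Cref{t1-hard} with element twins and a resized independent set, invoke (the budget-$ck$ analogue of) \Cref{claim:star-on-b}, and compare the YES and NO achievable ratios. The structural work is identical.

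The one genuine difference is calibration. The paper takes exactly $m$ twins per element and sets $X = mn + ck \approx n'$; this makes the YES ratio tend to $1$ (after assuming $n > ck/\varepsilon'$), and the NO upper bound is obtained as $\frac{4mn + O(m)}{(4+1/e^c-\delta)mn + O(m)}$ and then relaxed via \Cref{obs:adding-to-fractions} to $\frac{5e^c}{5e^c+1-e^c\delta}$. You instead take $X = \lfloor 3n'/2 \rfloor$ with $t$ a free parameter, which lands the YES ratio at $9/10$ and the NO ratio at $\frac{9/2}{5+1/e^c-\delta}$ directly, so the quotient $\tau_N/\tau_Y$ comes out to $\frac{5e^c}{5e^c+1-\delta e^c}$ without any relaxation step. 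Your route is arguably cleaner (the ``5'' appears organically rather than via a padding inequality), while the paper's route has the minor aesthetic advantage that $\opt$ is pushed near $1$. Both yield exactly the stated bound; neither gives a sharper constant than the theorem claims. Your handling of the error terms (choose $\delta = \varepsilon/(2e^c)$, then $t$ large enough to absorb $O(m+ck)/n'$) is equivalent to the paper's choice of $\delta, \varepsilon'$ and its assumption $n > ck/\varepsilon'$.
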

\begin{proof}
Let $\varepsilon > 0$ and $c \geq 1$. Set $\varepsilon' = \frac{\varepsilon}{2(5e^c - 1)}$, and $\delta = \frac{\varepsilon}{2e^c(1 - \varepsilon')}$. Let $(U, S_1, S_2, \ldots, S_m, k)$ be an instance of \SC{}, where $U$ denotes a universe of $n$ elements and $\{S_j\}_{j = 1}^m$ is a set family over $U$.
    We will assume that $n > \frac{ck}{\varepsilon'}$. This assumption is safe, as we may always assume that $k < n$ (otherwise finding a cover is trivial) and so we may always create an equivalent instance which satisfies our assumption by introducing $\lceil\frac{c}{\varepsilon'}\rceil$ copies of each element (where copies have the same set memberships as the original element).
    We construct an instance $(G = (V, E), a, b, k', \tau = 1)$ of \CRI{} as in~\Cref{t1-hard}, with two slight adjustments; see~\Cref{cri-inapprox-instance} for a visual depiction. Specifically, for each element $u_i$, create $m$ twin element vertices and let $X$, the size of the independent set, be $mn+ck$.

    \begin{figure}
        \centering
        \begin{tikzpicture}[ 
            roundnode/.style={circle, draw=black, fill =white, minimum size=8mm},
            goldnode/.style={circle, draw=black, fill=red, minimum size=8mm},
            bluenode/.style={circle, draw=black, fill=red, minimum size=8mm}
        ]
    
        \node[roundnode, fill=blue!20] (a) at (-1.5,3) {$a$};
        \node[roundnode, fill=yellow!60!orange] (b) at (1.5,3) {$b$};
        \node[roundnode] (c) at (0,1) {$c$};
        \draw[-] (a) -- (b);
        \draw[-] (a) -- (c);
        \draw[-] (b) -- (c);
    
        \node[roundnode] (IS1) at (-3.5,4) {$z_1$};
        \node[roundnode] (IS2) at (-3.5,3) {$z_2$};
        \node (dotsIS) at (-3.5,2.35) {$\vdots$};
        \node[roundnode] (ISX) at (-3.5,1.25) {$z_{mn+ck}$};
        \draw[-] (a) -- (IS1);
        \draw[-] (a) -- (IS2);
        \draw[-] (a) -- (ISX);
        \draw[decorate, decoration={brace, amplitude=10pt}, thick]
        ([xshift=-7pt]ISX.south west) -- ([xshift=-12pt]IS1.north west)
        node[midway, xshift=-35pt] {$IS_{mn+k}$};
    
        \node[roundnode] (S1) at (-2,-0.75) {$s_1$};
        \node[roundnode] (S2) at (-.5,-0.75) {$s_2$};
        \node (dots) at (.75,-0.75) {$\cdots$};
        \node[roundnode] (Sm) at (2,-0.75) {$s_m$};
        \draw[-] (c) -- (S1);
        \draw[-] (c) -- (S2);
        \draw[-] (c) -- (Sm);
    
        \node[roundnode] (E11) at (-4,-3) {$v_{1_1}$};
        \node[roundnode] (E12) at (-3,-3) {$v_{1_2}$};
        \node (dotsE) at (-2,-3) {$\cdots$};
        \node[roundnode] (E14m) at (-1,-3) {$v_{1_{m}}$};
    
        \draw[-] (S1) -- (E11);
        \draw[-] (S1) -- (E12);
        \draw[-] (S1) -- (E14m);
    
        \draw[-] (S2) -- (E11);
        \draw[-] (S2) -- (E12);
        \draw[-] (S2) -- (E14m);
    
        \draw[decorate, decoration={brace, amplitude=10pt}, thick]
        ([yshift=-10pt]E14m.south east) -- ([yshift=-10pt]E11.south west) 
        node[midway, yshift=-20pt] {$v_1$};
        \node (dotsE) at (0,-3) {$\cdots$};
        
        \node[roundnode] (En1) at (1,-3) {$v_{n_1}$};
        \node[roundnode] (En2) at (2,-3) {$v_{n_2}$};
        \node (dotsE) at (3,-3) {$\cdots$};
        \node[roundnode] (En4m) at (4,-3) {$v_{n_{m}}$};
    
        \draw[-] (Sm) -- (En1);
        \draw[-] (Sm) -- (En2);
        \draw[-] (Sm) -- (En4m);
    
        \draw[decorate, decoration={brace, amplitude=10pt}, thick]
        ([yshift=-10pt]En4m.south east) -- ([yshift=-10pt]En1.south west) 
        node[midway, yshift=-20pt] {$v_n$};
    
        \end{tikzpicture}
        \caption{A depiction of the construction given by~\Cref{thm:inapprox}. Given an instance of \SC{}, set vertices $s_j$ correspond to sets. Each element vertex $v_i$ is duplicated into $m$ twins. Here, $IS_X$ denotes an independent set of size $mn+ck$, with each vertex adjacent to $a$. We refer to the proof of~\Cref{thm:inapprox} for a formal description of the construction and the accompanying analysis.}
        \label{cri-inapprox-instance}
    \end{figure}
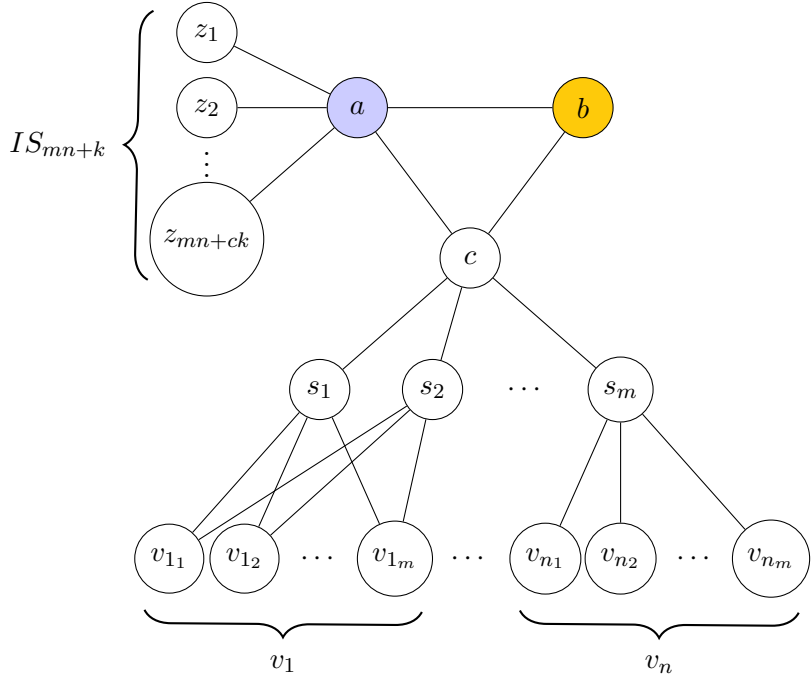

    Suppose that there is a set cover $C$ of $U$ with size $k$. Again, let $T = \{bs_j \colon S_j \in C\}$, and now $\ccg{G+T}{b}=2(mn+ck)+1(2)+1(k)+2(m-k)+2(mn)=4mn+2m + (2c-1)k+2$. Similarly, $\ccg{G+T}{a}=(mn+ck)+1(2)+2(m)+3(mn)=4mn+2m+ck+2$. 
    Since, $c \geq 1$, we have $\ccg{G}{a} \leq \ccg{G}{b}$. Then, by~\Cref{obs:adding-to-fractions} and our assumption regarding $n$, we have
    \begin{align*}        
      \ccrg{G+T}{a}{b} = \frac{4mn+2m+ck+2}{4mn+2m + (2c-1)k+2} &> \frac{4mn+2m+ck+2}{4mn+2m + 2ck+2} \\
      &> \frac{n}{n + ck} \\
      &> \frac{n}{n}\cdot \frac{1}{1 + \varepsilon'} > 1 - \varepsilon',
    \end{align*}
    where the final inequality can be verified via a simple proof by contradiction, making use of the fact that $\varepsilon' > 0$. Thus, if a set cover of size $k$ exists, we can always achieve a closeness ratio greater than $1 - \varepsilon'$.

    Now suppose that the union of any $ck$ sets covers at most $(1-\frac{1}{e^c} + \delta)n$ elements. The idea of the proof is to give a lower bound on $\ccg{G+T}{b}$, across all sets $T$ of at most $ck$ edge additions. 
    By an analysis identical to~\Cref{claim:star-on-b}, we may assume that all edges in an optimal solution (of size up to $ck$) are incident on $b$, and that all other endpoints are set vertices.
    Thus, given an optimal solution $T$, at most $ck$ set vertices are distance $1$ from $b$, and at most $(1-\frac{1}{e^c} + \delta)mn$ element vertices are distance $2$ from $b$. In particular, there remain $(m - ck)$ set vertices with distance $2$ from $b$, and $(\frac{1}{e^c} - \delta)mn$ element vertices with distance $3$ from $b$. 
    Then we have
    \begin{align*}
        \ccg{G+T}{b} &\geq 2(mn+ck)+2+2m-ck+2(1-\frac{1}{e^c} + \delta)(mn)+3(\frac{1}{e^c} - \delta)(mn) \\
        &=(4+\frac{1}{e^c} - \delta)mn+2m+ck+2.
    \end{align*} 
        
    Now we will analyze $\ccrg{G+T}{a}{b}$. We note that by monotonicity, $\ccg{G+T}{a} \leq \ccg{G}{a}$, so we will use the latter quantity when deriving our bound. We also note that we may assume $ck + 2 < m$, as otherwise all covers of size $ck$ may be checked in polynomial time.
    Substituting, we have
    \begin{align*}
        \ccrg{G+T}{a}{b} &\leq \frac{4mn+2m+ck+2}{(4+\frac{1}{e^c} - \delta)mn+2m+ck+2} \\
        &\leq \frac{4mn+2m+ck+2 + (m - ck - 2)}{(4+\frac{1}{e^c} - \delta)mn+2m+ck+2 + (m - ck - 2)} \quad\quad(\text{using~\Cref{obs:adding-to-fractions}}) \\
        &= \frac{4mn+3m}{(4+\frac{1}{e^c} - \delta)mn+3m} \\
        &= \frac{4n+3}{(4+\frac{1}{e^c} - \delta)n+3} \\
        &\leq \frac{4n+3 + n - 3}{(4+\frac{1}{e^c} - \delta)n+3 + n - 3}    \quad\quad\text{(using~\Cref{obs:adding-to-fractions})} \\
        &\leq \frac{5n}{(5+\frac{1}{e^c} - \delta)n} \\
        &=\frac{5e^c}{5e^c+1 - e^c\delta}.
    \end{align*}

    We have shown that if there is a set cover, a solution $T$ of size at most $k$ exists with $\ccrg{G+T}{a}{b} > 1 - \varepsilon'$. Meanwhile, if the union of any $ck$ sets covers at most $(1-\frac{1}{e^c} + \delta)n$ elements, then $\ccrg{G+T}{a}{b}\leq\frac{5e}{5e+1 - e^c \delta}$ for every set $T$ of at most $ck$ edge additions. To complete the 
    proof, it suffices to show that
    \[
      \frac{5e^c}{5e^c + 1 - \varepsilon}\cdot(1 - \varepsilon') \geq \frac{5e^c}{5e^c+1 - e^c\delta},
    \]
    since then a polynomial-time $\frac{5e^c}{5e^c + 1 - \varepsilon}$-approximation (using up to $ck$ edges) could distinguish between the two relevant cases for any~\SC{} instance.
    Simplifying the above, it suffices to show that
    \begin{align*}
        \frac{1 - \varepsilon'}{5e^c + 1 - \varepsilon} &\geq \frac{1}{5e^c+1 - e^c\delta} \\
        (1 - \varepsilon')(5e^c+1 - e^c\delta) &\geq 5e^c + 1 - \varepsilon \\
        \varepsilon &\geq \varepsilon'5e^c - \varepsilon' + e^c\delta - e^c\delta\varepsilon' \\
        \varepsilon &\geq \varepsilon'(5e^c - 1) + \delta e^c(1 - \varepsilon').
    \end{align*}
    The result now follows by substituting our selected values $\delta = \frac{\varepsilon}{2e^c(1 - \varepsilon')}$ and $\varepsilon' = \frac{\varepsilon}{2(5e^c - 1)}$.
\end{proof}
\section{A $\frac{6}{11}$-Approximation}\label{sec:approximation-alg}

In this section we show that a surprisingly simple strategy can improve upon the $\frac{1}{2}$-approximation yielded by adding the edge $ab$: we need only add edges from $a$ to $b$'s neighborhood.
The main intuition is that $\ccrg{G+ab}{a}{b}$ is actually much larger than $\frac{1}{2}$ unless a significant portion of the graph is in $b$'s neighborhood but not $a$'s. Even in this case, unless the optimum objective value is very close to $1$, $G+ab$ is already~$\frac{6}{11}$-approximate.
These two insights impose significant structural restrictions on hard instances, and the achievement of this section is to show how to reason about these instances.
We do so via careful arguments to bound the sizes of relevant vertex sets. Throughout the section, given a graph $G = (V, E)$ and vertices $a, b \in V$, we denote by $N_G(a)$ the open neighborhood of $a$ in $G$, $N_G[a] = N_G(a) \cup \{a\}$ the closed neighborhood of $a$, and $A^p = N_G(a) \setminus N_G[b]$ the \emph{private neighbors} of $a$.
Given a vertex subset $O \subseteq V$, we write $\meandistg{G}{a}{O} = \frac{1}{|O|}\cdot\sum_{v \in O} \distg{G}{a}{v}$ for the mean distance from $a$ to vertices in $O$.
We now state our algorithm in~\Cref{alg:neighborhood}.

\bigskip
    \begin{algorithm}[H]

      \SetKwFunction{Union}{Union}
      \SetKwFunction{FindCompress}{FindCompress}
      
    \Input{An instance $(G = (V, E), a, b, k)$ of \CRI{}, with $\ccg{G}{b} < \ccg{G}{a}$.}
    \Output{A set $S \subseteq V^2 \setminus E$ of cardinality at most $k$.}

        $S \gets \emptyset$ \\
        \If{$ab \notin E$}{
            $S \gets S \cup \{ab\}$ \\
            \If{$\ccg{G+S}{a} \leq \ccg{G+S}{b}$}{\label{algline:neighborhood-no-switching-condition}
                \Return $S \text{ or } \emptyset\text{, whichever is better.}$\label{algline:neighborhood-no-switching-return}
            }
        }
        \While{$|S| < k$}{\label{algline:neighborhood-while-loop}
            $u \gets \text{arbitrary vertex from } N_{G+S}(b) \setminus N_{G+S}[a]$ \\\label{algline:neighborhood-vertex-selection}
            $S \gets S \cup \{au\}$ \\
            \If{$\ccg{G+S}{a} \leq \ccg{G+S}{b}$}{\label{algline:neighborhood-termination}
            \Return $S \text{ or } S \setminus \{au\}\text{, whichever is better.}$\label{algline:neighborhood-termination-line}
            }
        }
        \Return $S$ or $\emptyset$, whichever is better.\label{algline:neighborhood-final-return}
      \caption{\label{alg:neighborhood}}
    \end{algorithm}

\bigskip
To analyze~\Cref{alg:neighborhood}, we begin by proving that the pair of early termination conditions on lines~\ref{algline:neighborhood-no-switching-condition} and~\ref{algline:neighborhood-termination} are safe.
Recall that we assume $\ccg{G}{b} < \ccg{G}{a}$, WLOG. We show in~\Cref{lemma:no-switching} that either
$\ccg{G+ab}{b} < \ccg{G+ab}{a}$ or one of $G$ or $G + ab$ satisfies our desired approximation bound. In other words,
the condition given on Line~\ref{algline:neighborhood-no-switching-condition} of~\Cref{alg:neighborhood} is safe.

\begin{lemma}[The No Switching Lemma]\label{lemma:no-switching}
    Let $(G, a, b, k)$ be an instance of~\CRI{} with $\ccg{G}{b} < \ccg{G}{a}$. Then at least one of the following is true:
    \begin{enumerate}[(i)]
        \item $\ccg{G+ab}{b} < \ccg{G+ab}{a}$,
        \item $\ccrg{G+ab}{a}{b} \geq \frac{6}{11}$, or
        \item $\ccrg{G}{a}{b} \geq \frac{6}{11}$.
    \end{enumerate}
    \end{lemma}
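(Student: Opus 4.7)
My plan is to verify the contrapositive: assuming~(i) fails, I will establish~(ii) or~(iii) via a case analysis on $\alpha' := \ccg{G+ab}{a}$. First, if $ab \in E$ then $G+ab = G$ and~(i) is equivalent to the hypothesis $\ccg{G}{b} < \ccg{G}{a}$, so I may assume $ab \notin E$ and write $d := \distg{G}{a}{b} \geq 2$. Applying the triangle-inequality argument behind~\Cref{observation:adjacent} to $G+ab$, in which $a$ and $b$ are now adjacent, I obtain $\beta' - \alpha' \leq n - 2$, where $\beta' := \ccg{G+ab}{b}$; combined with~\Cref{obs:adding-to-fractions}, this gives $\ccrg{G+ab}{a}{b} \geq \frac{\alpha'}{\alpha' + (n-2)}$.

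The first case is $\alpha' \geq \frac{6(n-2)}{5}$. Then a second application of~\Cref{obs:adding-to-fractions} immediately yields $\ccrg{G+ab}{a}{b} \geq \frac{6}{11}$, establishing~(ii). In the second case, $\alpha' < \frac{6(n-2)}{5}$, and I plan to establish~(iii) by extracting tight structural consequences. Since every non-neighbor of $a$ in $G+ab$ contributes at least $2$ to $\alpha'$, the number of such non-neighbors is strictly less than $\frac{n-7}{5}$, which forces $|N_G(a)| > \frac{4n-3}{5}$. Writing $V_a = \{v \neq a, b : \distg{G}{a}{v} > \distg{G}{b}{v} + 1\}$, a careful accounting identifies $\alpha - \alpha'$ as $(d-1) + \sum_{v \in V_a}(\distg{G}{a}{v} - \distg{G}{b}{v} - 1)$; bounding each summand by $d - 1$ via the triangle inequality and $|V_a|$ by the number of non-neighbors of $a$ gives $\alpha < \frac{(n-2)(d+5)}{5}$. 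Symmetrically, for every $v \in N_G(a)$ the reverse triangle inequality yields $\distg{G}{b}{v} \geq d - 1$, so $\beta > d + \frac{(d-1)(4n-3)}{5}$. Dividing, the desired inequality $\frac{\beta}{\alpha} \geq \frac{6}{11}$ reduces after cross-multiplication to $38nd + 34d - 74n + 93 \geq 0$, which is routine to verify for every $d \geq 2$.

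The hard part will be the second case, where both bounds on $\alpha$ and $\beta$ must be tracked as functions of the distance $d$ and ultimately combined so that the resulting inequality holds uniformly. The key structural insight is that $\alpha'$ being small simultaneously caps how much $\alpha$ can exceed $\alpha'$ (only the few vertices in $V_a$ can experience a distance drop when $ab$ is added, and each drop is at most $d - 1$) and forces $\beta$ to be large (the near-universal neighborhood of $a$ means many vertices sit at distance at least $d - 1$ from $b$). This coupling, rather than any monotonicity or submodularity argument, is what drives the $\frac{6}{11}$ ratio for $d \geq 2$.
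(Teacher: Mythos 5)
Your proof is correct; I verified the key computations (the count $\ccg{G+ab}{a} \geq n-1+t$ giving $|N_G(a)| > \frac{4n-3}{5}$, the exact identity for the drop $\ccg{G}{a} - \ccg{G+ab}{a}$, which rests on $\distg{G+ab}{a}{v} = \min(\distg{G}{a}{v}, 1+\distg{G}{b}{v})$, and the final cross-multiplication to $38nd + 34d - 74n + 93 \geq 0$ for $d \geq 2$). The skeleton matches the paper's --- both arguments hinge on the fact that a sub-$\frac{6}{11}$ ratio in $G+ab$ without switching forces $a$'s neighborhood to occupy more than roughly a $\frac{4}{5}$ fraction of the vertices --- but your execution is genuinely different. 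The paper argues by contradiction: \Cref{claim:switching-large-set} derives $|A^p| > \frac{4(n-2)}{5}$ via a four-parameter decomposition into private, mutual, and remaining vertices with mean distances (requiring the auxiliary \Cref{obs:subtracting-from-fractions}), and then uses the failure of (iii) to force $\distg{G}{a}{b} < 2$, contradicting the failure of (i). You instead split on whether $\ccg{G+ab}{a} \geq \frac{6(n-2)}{5}$ --- note your case-2 threshold is exactly what the negation of (ii) yields once combined with the adjacency bound $\ccg{G+ab}{b} \leq \ccg{G+ab}{a} + n - 2$, so the two case analyses are logically aligned --- obtain largeness of $N_G(a)$ from an elementary degree count rather than the paper's parametrized comparison, bound $\ccg{G}{a}$ above via the exact drop identity (a sharper tool than the paper's mean-distance estimates), bound $\ccg{G}{b}$ below over $N_G(a)$ by the reverse triangle inequality, and then establish (iii) directly for every $d \geq 2$ instead of deriving $d = 1$ by contradiction. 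What your route buys is a more elementary and self-contained argument that avoids \Cref{obs:subtracting-from-fractions} and the $x_R, x_M, y_R, y_M$ bookkeeping; what the paper's formulation buys is that its statement about the \emph{private} neighborhood $A^p$ is recycled almost verbatim (with $a$ and $b$ swapped) in part (iii) of \Cref{claim:set-sizes} for the later analysis in \Cref{lemma:Bp-big}, so its heavier claim does double duty. One cosmetic remark: in your case 2 the bound $t < \frac{n-7}{5}$ silently forces $n \geq 8$ (otherwise the case is vacuous since $t \geq 0$), which is harmless but worth a sentence.
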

\begin{proof}

    Assume toward a contradiction that none of (i), (ii), or (iii) is true. 
    Recall that $A^p = N_G(a) \setminus N_G[b] = N_{G+ab}(a) \setminus N_{G+ab}[b]$ denotes the private neighbors of $a$.
    Let $M = N_G(a) \cap N_G(b)$ denote the mutual neighbors of $a$ and $b$.
    Let $R = V(G) \setminus (A^p \cup M \cup \{a, b\})$.
    We will begin by using the assumption that parts (i) and (ii) do not hold, i.e., $\ccg{G+ab}{a} < \frac{6}{11}\cdot\ccg{G+ab}{b}$, to show that $A^p$ must be large.

    \begin{claim}\label{claim:switching-large-set}
        $|A^p| > \frac{4(n - 2)}{5}$.
    \end{claim}
    \begin{proof}[Proof of~\Cref{claim:switching-large-set}]
        Assume otherwise. Then there exist non-negative $x_R, x_M, y_R, y_M$ such that $y_R + y_M = \frac{(n - 2)}{5}$, $|A^p| = \frac{4(n - 2)}{5} - x_R - x_M$, $|R| = y_R + x_R$, and $|M| = y_M + x_M$.
        Now we complete the proof by comparing $\ccg{G+ab}{a}$ to an upper bound for $\ccg{G+ab}{b}$.
        We begin with the former:
        \begin{align*}
            \ccg{G+ab}{a} &= \distg{G+ab}{a}{a} + \distg{G+ab}{a}{b} + 1\cdot|A^P| + 1\cdot|M| + \meandistg{G+ab}{a}{R}\cdot|R| \\
            &= 1 + \frac{4(n-2)}{5} - x_R - x_M + y_M + x_M + \meandistg{G+ab}{a}{R}\cdot(y_R + x_R).
        \end{align*} 
        Now, we observe that $R \cap N_{G+ab}[a] = \emptyset$, so we may write $\meandistg{G+ab}{a}{R} = 2 + z$ for some non-negative $z$.
        Then we have
        \begin{align*}
            \ccg{G+ab}{a} &= 1 + \frac{4(n-2)}{5} - x_R - x_M + y_M + x_M + 2y_R + 2x_R + zy_R + zx_R \\
            &= 1 + \frac{4(n-2)}{5} + y_M + 2y_R + x_R + zy_R + zx_R \\
            &= 1 + \frac{6(n-2)}{5} - y_M + x_R + zy_R + zx_R \quad\quad(\text{using } y_R = \frac{n-2}{5} - y_M).
        \end{align*}
        Now we write an upper bound for $\ccg{G+ab}{b}$, using that for every $v \in A^p$, $\distg{G+ab}{b}{v} = 2$.
        \begin{align*}
            \ccg{G+ab}{b} &= \distg{G+ab}{b}{b} + \distg{G+ab}{b}{a} + 2\cdot|A^P| + 1\cdot|M| + \meandistg{G+ab}{b}{R}\cdot|R| \\
            &= 1 + \frac{8(n-2)}{5} - 2x_M - 2x_R + y_M + x_M + \meandistg{G+ab}{b}{R}(y_R + x_R).
        \end{align*}
        By the triangle inequality, $\meandistg{G+ab}{b}{R} \leq \meandistg{G+ab}{a}{R} + 1 = 3 + z$, so
        \begin{align*}
            \ccg{G+ab}{b} &\leq 1 + \frac{8(n-2)}{5} - 2x_M - 2x_R + y_M + x_M + 3y_R + 3x_R + zy_R + zx_R \\
            &= 1 + \frac{8(n-2)}{5} + y_M + 3y_R + x_R + zy_R + zx_R - x_M \\
            &\leq 1 + \frac{8(n-2)}{5} + y_M + 3y_R + x_R + zy_R + zx_R \\
            &= 1 + \frac{11(n-2)}{5} -2y_M + x_R + zy_R + zx_R \quad\quad(\text{using } y_R = \frac{n-2}{5} - y_M).
        \end{align*}
        Now we will examine $\ccrg{G+ab}{a}{b}$.
        We note that since $y_M \leq \frac{n-2}{5}$, we have $\frac{6(n-2)}{5} - y_M < \frac{11(n-2)}{5} -2y_M$.
        Thus, we may apply~\Cref{obs:adding-to-fractions} as follows.
        \begin{align*}
            \frac{\ccg{G+ab}{a}}{\ccg{G+ab}{b}}
            \geq \frac{1 + \frac{6(n-2)}{5} - y_M + x_R + zy_R + zx_R}{1 + \frac{11(n-2)}{5} -2y_M + x_R + zy_R + zx_R}
            \geq \frac{\frac{6(n-2)}{5} - y_M}{\frac{11(n-2)}{5} -2y_M}.
        \end{align*}
        Now we wish to eliminate the remaining $y_M$ terms. We use another observation:
        \begin{observation}\label{obs:subtracting-from-fractions}
            For all real numbers $w, q, p$ with $0 \leq w < \frac{q}{2} < p < q$, $\frac{p - w}{q - 2w} \geq \frac{p}{q}$.
        \end{observation}
            \Cref{obs:subtracting-from-fractions} is also simple to verify, and it allows us to complete the proof:
        \begin{align*}
            \ccrg{G+ab}{a}{b} &= \frac{\ccg{G+ab}{a}}{\ccg{G+ab}{b}} \geq \frac{\frac{6(n-2)}{5} - y_M}{\frac{11(n-2)}{5} -2y_M} \geq \frac{\frac{6(n-2)}{5}}{\frac{11(n-2)}{5}} = \frac{6}{11},
        \end{align*}
        contradicting the assumption that part (ii) of the lemma does not hold.
    \end{proof}

    Now we will use~\Cref{claim:switching-large-set} and the assumption that part (iii) does not hold, i.e., $\ccg{G}{b} < \frac{6}{11}\cdot\ccg{G}{a}$, to show that $\distg{G}{a}{b} = 1$.
    This will complete the proof, as then $G+ab = G$, so by hypothesis $\ccg{G+ab}{b} = \ccg{G}{b} < \ccg{G}{a} = \ccg{G+ab}{a}$, contradicting the assumption that part (i) does not hold.
    In the following, let $O = M \cup R$, i.e, all of the vertices besides $a$, $b$, and $a$'s private neighbors. 
    We will begin by giving a lower bound for $\ccg{G}{b}$. Using the triangle inequality, we note that for each $v \in A^p$, $\distg{G}{b}{v} \geq \distg{G}{a}{b} - 1$.
    Then we have
    \begin{align*}
        \ccg{G}{b} &= \distg{G}{b}{b} + \distg{G}{b}{a} + \meandistg{G}{b}{A^p}\cdot|A^p| + \meandistg{G}{b}{O}\cdot|O| \\
        &\geq 1 + (\distg{G}{a}{b} - 1)\cdot|A^p| + \meandistg{G}{b}{O}\cdot|O|
    \end{align*}
    Now we write an upper bound for $\ccg{G}{a}$. We again make use of the triangle inequality: for each $v \in O$,~$\distg{G}{a}{v} \leq \distg{G}{a}{b} + \distg{G}{b}{v}$. Then $\meandistg{G}{a}{O} \leq \distg{G}{a}{b} + \meandistg{G}{b}{O}$, and thus
    \begin{align*}
        \ccg{G}{a} &= \distg{G}{a}{a} + \distg{G}{a}{b} + |A^p| + \meandistg{G}{a}{O}\cdot|O| \\
        &\leq 1 + |A^p| + (\distg{G}{a}{b} + \meandistg{G}{b}{O})\cdot|O|.
    \end{align*} 
    From the assumption that part (iii) does not hold, we now have the following,
    \begin{align*}
        \ccg{G}{b} &< \frac{6}{11}\cdot\ccg{G}{a} \\
        1 + (\distg{G}{a}{b} - 1)\cdot|A^p| + \meandistg{G}{b}{O}\cdot|O| &< \frac{6}{11}\bigg[1 + |A^p| + (\distg{G}{a}{b} + \meandistg{G}{b}{O})\cdot|O|\bigg] \\
        \frac{5}{11} + \left(\distg{G}{a}{b} - \frac{17}{11}\right)\cdot|A^p| &< \bigg[\frac{6}{11}\distg{G}{a}{b} - \frac{5}{11}\meandistg{G}{b}{O}\bigg]\cdot|O| \\
        \left(\distg{G}{a}{b} - \frac{17}{11}\right)\cdot|A^p| &< \bigg[\frac{6}{11}\distg{G}{a}{b} - \frac{5}{11}\meandistg{G}{b}{O}\bigg]\cdot|O|. 
    \end{align*}

    Recall that if $\distg{G}{a}{b} = 1$ then we are done, since this is what we are trying to prove.
Then we may assume that $\distg{G}{a}{b} \geq 2$, which implies that the LHS of the above inequality is positive.
It follows that the RHS is also positive. We now use~\Cref{claim:switching-large-set} to substitute $\frac{4(n-2)}{5} < |A^p|$ and $|O| < \frac{n-2}{5}$:\looseness=-1
\begin{align*}
    \left(\distg{G}{a}{b} - \frac{17}{11}\right)\cdot\frac{4(n-2)}{5} &< \bigg[\frac{6}{11}\distg{G}{a}{b} - \frac{5}{11}\meandistg{G}{b}{O}\bigg]\cdot\frac{n-2}{5} \\
    \distg{G}{a}{b}\cdot\frac{44(n - 2)}{55} - \frac{68(n-2)}{55} &< \distg{G}{a}{b}\cdot\frac{6(n-2)}{55} - \meandistg{G}{b}{O}\cdot\frac{5(n-2)}{55} \\
    \distg{G}{a}{b}\cdot\frac{38(n - 2)}{55} &< \frac{68(n-2)}{55} - \meandistg{G}{b}{O}\frac{5(n-2)}{55} \\
    \distg{G}{a}{b}\cdot\frac{38(n - 2)}{55} &< \frac{68(n-2)}{55} \\
    \distg{G}{a}{b} &< \frac{68}{38} < 2,
\end{align*} 
so $\distg{G}{a}{b} = 1$, contradicting the assumption that part (i) does not hold, as desired.
\end{proof}

Next we show that the termination condition given on Line~\ref{algline:neighborhood-termination} of~\Cref{alg:neighborhood} is safe. This will enable us to assume not only that $\ccg{G+ab}{b} < \ccg{G+ab}{a}$ (by~\Cref{lemma:no-switching}), but also that this inequality holds for the duration of the algorithm.

\begin{lemma}[The Termination Lemma]\label{lemma:termination}
    Let $G = (V, E)$, $a, b \in V$, $ab \in E$, $\ccg{G}{b} < \ccg{G}{a}$, $\ccrg{G}{a}{b} < \frac{2}{3}$, and $u \in N_G(b)$. If $\ccg{G+au}{a} \leq \ccg{G+au}{b}$, then $\ccrg{G+au}{a}{b} \geq \frac{2}{3}$.
\end{lemma}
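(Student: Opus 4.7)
The plan is to show that adding the edge $au$ (with $u\in N_G(b)$, $au\notin E$) leaves $\ccg{\cdot}{b}$ completely unchanged, while decreasing $\ccg{\cdot}{a}$ by at most $\frac{1}{2}\ccg{G}{a}$; then the hypothesis $\ccrg{G}{a}{b}<\frac{2}{3}$ will force $\ccg{G+au}{a}>\frac{2}{3}\ccg{G}{b}$, yielding the desired ratio.

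First I would reduce to the case $u\in N_G(b)\setminus N_G[a]$, since if $au\in E$ then $G+au=G$ and the hypothesis $\ccg{G+au}{a}\leq\ccg{G+au}{b}$ clashes with the standing assumption $\ccg{G}{b}<\ccg{G}{a}$. Next I would verify $\ccg{G+au}{b}=\ccg{G}{b}$ by arguing that no $b$-$v$ shortest path profits from using the new edge: any such path is of the form $b\leadsto a\to u\leadsto v$ or $b\leadsto u\to a\leadsto v$, costing at least $2+\distg{G}{u}{v}$ or $2+\distg{G}{a}{v}$ respectively, whereas the direct detours $b\to u\leadsto v$ and $b\to a\leadsto v$ cost only $1+\distg{G}{u}{v}$ and $1+\distg{G}{a}{v}$ (available because $ab,bu\in E$).

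The key step is to control $|D|$, where $D=\{v:\distg{G+au}{a}{v}<\distg{G}{a}{v}\}$. Since $\distg{G+au}{a}{v}=\min(\distg{G}{a}{v},\,1+\distg{G}{u}{v})$ and $\distg{G}{a}{u}\leq 2$, the triangle inequality yields $1+\distg{G}{u}{v}\geq\distg{G}{a}{v}-1$. So whenever $v\in D$ the distance drops by exactly $1$, and moreover $\distg{G}{a}{v}\geq 2$ (otherwise no decrease is possible). Charging each $v\in D$ against its own contribution to $\ccg{G}{a}$ then yields $2|D|\leq\sum_{v\in D}\distg{G}{a}{v}\leq\ccg{G}{a}$, and hence $\ccg{G+au}{a}=\ccg{G}{a}-|D|\geq\frac{1}{2}\ccg{G}{a}$.

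To finish, the hypothesis $\ccrg{G}{a}{b}<\frac{2}{3}$ combined with $\ccg{G}{b}<\ccg{G}{a}$ gives $\ccg{G}{a}>\frac{3}{2}\ccg{G}{b}$, so $\frac{1}{2}\ccg{G}{a}>\frac{3}{4}\ccg{G}{b}>\frac{2}{3}\ccg{G}{b}$. Using $\ccg{G+au}{b}=\ccg{G}{b}$ together with the hypothesis $\ccg{G+au}{a}\leq\ccg{G+au}{b}$, this gives $\ccrg{G+au}{a}{b}=\ccg{G+au}{a}/\ccg{G+au}{b}\geq\frac{2}{3}$. The main obstacle is choosing the right bound on $\ccg{G+au}{a}$: a naive triangle-inequality bound of the form $\ccg{G+au}{a}\geq\ccg{G}{b}-(n-2)$ is far too weak, because the hypothesis together with $ab\in E$ only guarantees $\ccg{G}{b}<2(n-2)$. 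Charging the distance decrease against $\ccg{G}{a}$ itself, rather than relating $a$'s distances to $b$'s vertex by vertex, is what makes the argument close.
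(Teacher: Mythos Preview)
Your proof is correct and takes a genuinely different route from the paper's. Both arguments begin from the same observation that $\ccg{G+au}{b}=\ccg{G}{b}$ (since $b$ is already adjacent to both endpoints of the new edge), but they diverge in how they control $\ccg{G+au}{a}$. The paper partitions $V\setminus\{a,b\}$ into vertices strictly closer to $a$, strictly closer to $b$, and equidistant, proves that the hypothesis $\ccrg{G}{a}{b}<\frac{2}{3}$ forces more than $\frac{n-2}{2}$ vertices to be strictly closer to $b$, and then derives a pigeonhole contradiction by showing that $\ccrg{G+au}{a}{b}<\frac{2}{3}$ would symmetrically force more than $\frac{n-2}{2}$ vertices to be strictly closer to $a$ in $G+au$, a set which (because $u\in N_G(b)$) coincides with the corresponding set in $G$. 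Your approach instead bounds the drop $\ccg{G}{a}-\ccg{G+au}{a}$ directly: since $\distg{G}{a}{u}=2$, each distance from $a$ decreases by at most one, and only at vertices with $\distg{G}{a}{v}\geq 2$, so the total drop is at most $\frac{1}{2}\ccg{G}{a}$; combined with $\ccg{G}{a}>\frac{3}{2}\ccg{G}{b}$ this pins $\ccg{G+au}{a}$ above $\frac{3}{4}\ccg{G}{b}=\frac{3}{4}\ccg{G+au}{b}$. Your charging argument is shorter and entirely self-contained, while the paper's partition argument yields a reusable structural fact (the $|B_G|>\frac{n-2}{2}$ bound) at the cost of a slightly longer proof.
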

\begin{proof}
    Consider a partition of the vertices $V$ into five sets: $\{a\}$, $\{b\}$, other vertices
    closer to $a$ than to $b$, closer to $b$ than to $a$, and
    those equidistant from $a$ and $b$. Formally, the latter three sets are
    $A_G \vcentcolon = \{v \in V\setminus\{a, b\} \colon \distg{G}{a}{v} < \distg{G}{b}{v}\}$,
    $B_G \vcentcolon = \{v \in V\setminus\{a, b\} \colon \distg{G}{a}{v} > \distg{G}{b}{v}\}$,  
    and
    $C_G \vcentcolon = \{v \in V\setminus\{a, b\} \colon \distg{G}{a}{v} = \distg{G}{b}{v}\}$.  

    Let $A_{G+au}$, $B_{G+au}$, and $C_{G+au}$ be defined similarly.
    We will now use these definitions and the triangle inequality to show that $B_G$ is large.
    
    \begin{restatable}{claim}{claimterminationbigset}\label{claim:termination-big-set}
        $|B_G| > \frac{n - 2}{2}$.
    \end{restatable}
    \begin{proof}[Proof of~\Cref{claim:termination-big-set}]
        We first use the definitions of $A_G$ and $C_G$ to write a lower bound for $\ccg{G}{b}$.
        \begin{align*}
            \ccg{G}{b} &= \distg{G}{b}{b} + \distg{G}{b}{a} + \meandistg{G}{b}{B_G}\cdot|B_G| + \meandistg{G}{b}{A_G \cup C_G}\cdot|A_G \cup C_G| \\
            &\geq 1 + \meandistg{G}{b}{B_G}\cdot|B_G| + \meandistg{G}{a}{A_G \cup C_G}\cdot|A_G \cup C_G|.
        \end{align*}
        
        Next, we will give an upper bound for $\ccg{G}{a}$. Using the triangle inequality and the fact that $ab \in E(G)$ (by hypothesis), we observe that $\meandistg{G}{a}{B_G} \leq \meandistg{G}{b}{B_G} + 1$ (in fact equality holds, using the definition of $B_G$).
        Then we have
        \begin{align*}
            \ccg{G}{a} &= \distg{G}{a}{a} + \distg{G}{a}{b} + \meandistg{G}{a}{B_G}\cdot|B_G| + \meandistg{G}{a}{A_G \cup C_G}\cdot|A_G \cup C_G| \\
            &\leq 1 + (\meandistg{G}{b}{B_G} + 1)\cdot|B_G| + \meandistg{G}{a}{A_G \cup C_G}\cdot|A_G \cup C_G|. 
        \end{align*}
    
        Now recall that, by hypothesis, $\ccg{G}{b} < \frac{2}{3}\cdot\ccg{G}{a}$. Then by substituting our bounds we have
        \begin{align*}
            1 + \meandistg{G}{b}{B_G}&\cdot|B_G| + \meandistg{G}{a}{A_G \cup C_G}\cdot|A_G \cup C_G| \\ &< \frac{2}{3}\cdot\bigg[1 + (\meandistg{G}{b}{B_G} + 1)\cdot|B_G| + \meandistg{G}{a}{A_G \cup C_G}\cdot|A_G \cup C_G| \bigg] \\
            &= \frac{2}{3} + \frac{2}{3}\cdot\meandistg{G}{b}{B_G}\cdot|B_G| + \frac{2}{3}\cdot|B_G| + \frac{2}{3}\cdot \meandistg{G}{a}{A_G \cup C_G}\cdot|A_G \cup C_G|.
        \end{align*}
        Combining terms, we have
        \begin{align*}
            \frac{1}{3} + \frac{1}{3}\cdot\meandistg{G}{a}{A_G \cup C_G}\cdot|A_G \cup C_G| &< \bigg[\frac{2}{3} - \frac{1}{3}\cdot\meandistg{G}{b}{B_G}\bigg]\cdot|B_G|,
        \end{align*}
        or, by dropping the $\frac{1}{3}$,
        \begin{align*}
            \frac{1}{3}\cdot\meandistg{G}{a}{A_G \cup C_G}\cdot|A_G \cup C_G| &< \bigg[\frac{2}{3} - \frac{1}{3}\cdot\meandistg{G}{b}{B_G}\bigg]\cdot|B_G|.
        \end{align*}
        Now, we complete the proof by assuming toward a contradiction that the claim is false, i.e., that $|B_G| \leq \frac{n-2}{2}$, and consequently $\frac{n-2}{2} \leq |A_G \cup C_G|$.
        Then we substitute these inequalities as follows:
        \begin{align*}
            \frac{1}{3}\cdot\meandistg{G}{a}{A_G \cup C_G}\cdot\frac{n-2}{2} &< \bigg[\frac{2}{3} - \frac{1}{3}\cdot\meandistg{G}{b}{B_G}\bigg]\cdot\frac{n-2}{2},
        \end{align*}
        which implies
        \begin{align*}
            \meandistg{G}{a}{A_G \cup C_G} + \meandistg{G}{b}{B_G} &< 2.            
        \end{align*}
        This is a contradiction, since $a \notin A_G \cup C_G$ and $b \notin B_G$, implying that both summands on the LHS are at least $1$. This completes the proof of the claim.
    \end{proof}
        
    Now we complete the proof of the lemma by assuming toward a contradiction that $\ccg{G+au}{a} < \frac{2}{3}\cdot\ccg{G+au}{b}$.
    Then, similarly to~\Cref{claim:termination-big-set}, $|A_{G+au}| > \frac{n - 2}{2}$.
    To see why this is a contradiction, observe that since $u \in N_G(b)$, $A_{G+au} \subseteq A_G$ (in fact the sets are equal).
    Then we have $|A_{G}|, |B_{G}| > \frac{n - 2}{2}$, but these sets are disjoint by definition, and their union has cardinality at most $n - 2$, so we have a contradiction. 
\end{proof}


What remains is to analyze the behavior of the loop in~\Cref{alg:neighborhood}.
We observe that if at any point $N_{G+S}(b) \setminus N_{G+S}(a) = \emptyset$, then $\ccg{G+S}{a} \leq \ccg{G+S}{b}$.
We may therefore assume that the number of edge additions $k < |N_{G}(b) \setminus N_G(a)|$. Otherwise the condition on Line~\ref{algline:neighborhood-termination} will be triggered and, by~\Cref{lemma:termination}, the returned solution
will be a $\frac{2}{3}$-approximation. This implies that the vertex $u$ on Line~\ref{algline:neighborhood-vertex-selection} is well-defined.

To complete the analysis, we need to show that the set $S$ returned on Line~\ref{algline:neighborhood-final-return} yields a $\frac{6}{11}$-approximation.
We begin by showing a sufficient condition: a large value of $k + |N_G(a)|$. 

\begin{restatable}{lemma}{lemmaarbitraryedges}\label{lemma:arbitrary-edges}
    Let $(G = (V, E), a, b, k)$ be an instance of \CRI{} with $\ccg{G}{b} \leq \ccg{G}{a}$. If $k + |N_G(a)| \geq \frac{n}{6}$, then~\Cref{alg:neighborhood} returns a $\frac{6}{11}$-approximation.
\end{restatable}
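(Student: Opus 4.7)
The plan is a three-way case split determined by which return statement in \Cref{alg:neighborhood} is executed. If $\ccg{G}{a} = \ccg{G}{b}$ then $\ccrg{G}{a}{b} = 1$ and returning $\emptyset$ is optimal, so assume the strict inequality $\ccg{G}{b} < \ccg{G}{a}$; in particular the loop invariant $\ccg{G+S}{a} > \ccg{G+S}{b}$ is meaningful and maintained.

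\textbf{Early returns.} The return on Line~\ref{algline:neighborhood-no-switching-return} is triggered exactly when adding $ab$ causes the centrality inequality to reverse, which is the negation of part~(i) of the No Switching Lemma (\Cref{lemma:no-switching}). Hence either (ii) or (iii) holds, giving ratio $\geq \frac{6}{11}$ for $G+\{ab\}$ or $G$ respectively, and the algorithm returns whichever is better. The return on Line~\ref{algline:neighborhood-termination-line} occurs when an edge $au$ causes the inequality to reverse mid-loop. Either $\ccrg{G+S\setminus\{au\}}{a}{b} \geq \frac{2}{3}$ already, in which case $S\setminus\{au\}$ suffices, or the Termination Lemma (\Cref{lemma:termination}) applies, with its preconditions ($ab$ present, the centrality inequality pointing the right way, and $u \in N(b)$) guaranteed by the algorithm's invariants; this yields $\ccrg{G+S}{a}{b}\geq \frac{2}{3} \geq \frac{6}{11}$.

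\textbf{The main case (Line~\ref{algline:neighborhood-final-return}).} The substantive work is showing that when the loop exhausts the budget without ever reversing the inequality, the final set $S$ yields ratio $\geq \frac{6}{11}$. I first dispatch the subcase $\ccrg{G}{a}{b} \geq \frac{6}{11}$ by returning $\emptyset$; otherwise, combining \Cref{lemma:no-switching} (when $ab$ was inserted) with the loop invariant gives $\ccg{G+S}{a} > \ccg{G+S}{b}$ at the end. Since every edge in $S$ is incident on $a$ and adds a new neighbor, $|N_{G+S}(a)| = |N_G(a)| + k \geq \frac{n}{6}$ by hypothesis. I would then partition $V\setminus\{a,b\}$ into the four sets $P = N_{G+S}(a)\setminus N_{G+S}[b]$, $Q = N_{G+S}(b)\setminus N_{G+S}[a]$, $M = N_{G+S}(a)\cap N_{G+S}(b)$, and the remainder $R$. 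The triangle inequality (tight because $ab \in E(G+S)$) bounds the per-vertex contributions to $\ccg{G+S}{a}-\ccg{G+S}{b}$, giving $\ccg{G+S}{a} - \ccg{G+S}{b} \leq |Q| - |P| + |R|$; a direct distance count gives the lower bound $\ccg{G+S}{b} \geq 1 + |M| + |Q| + 2|P| + 2|R|$. Rewriting the target $11\ccg{G+S}{b} \geq 6\ccg{G+S}{a}$ as $5\ccg{G+S}{b} \geq 6(\ccg{G+S}{a} - \ccg{G+S}{b})$, substituting both bounds, and eliminating $|Q|$ via $|Q| = n - 2 - |P| - |M| - |R|$ reduces everything to the linear inequality $7 + 6|M| + 17|P| + 5|R| \geq n$. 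The hypothesis translates to $|P| + |M| \geq \frac{n}{6} - 1$, and then $6|M| + 17|P| \geq 6(|P| + |M|) \geq n - 6$ closes the inequality.

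I anticipate the main obstacle to be bookkeeping around the two initial cases (whether $ab\in E$ originally), ensuring the loop invariant is maintained uniformly and that the preconditions of \Cref{lemma:termination} are always met when needed. Once the invariants are formalized, the structural analysis via the partition $P, M, Q, R$ is essentially mechanical, and the slack of roughly $7$ in the final inequality suggests the constants are not tight in this regime, which is consistent with the sharper $\frac{2}{3}$ bound available via the Termination Lemma when an edge addition actually reverses the inequality.
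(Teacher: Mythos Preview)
Your proposal is correct. The treatment of the two early-return cases matches the paper's use of \Cref{lemma:no-switching} and \Cref{lemma:termination}. The main case is handled differently: the paper uses the coarser two-set partition $A = N_{G+S}(a)\setminus\{b\}$ and $B = V\setminus(A\cup\{a,b\})$, bounds $\ccg{G+S}{a} \leq \ccg{G+S}{b} + |B|$, and then finishes with a case split on whether $\ccg{G+S}{b} \geq n$ or $\ccg{G+S}{b} = n-1$ (the latter forcing $b$ to be universal). Your four-set partition $P,M,Q,R$ yields the sharper difference bound $\ccg{G+S}{a} - \ccg{G+S}{b} \leq |Q|-|P|+|R|$ together with the explicit lower bound on $\ccg{G+S}{b}$, which lets you verify $5\ccg{G+S}{b} \geq 6(\ccg{G+S}{a}-\ccg{G+S}{b})$ by a single linear inequality and avoid the paper's endgame case split entirely. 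The trade-off is more partition bookkeeping up front versus a cleaner finish; both arguments are short, and the slack you observe ($7+6|M|+17|P|+5|R| \geq n$ with roughly a $+1$ margin) confirms that neither version is pushing the constants.
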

\begin{proof}
    Let $S$ be the set of edge additions returned by~\Cref{alg:neighborhood}. By~\Cref{lemma:termination}, we may assume that $\ccg{G+S}{b} < \ccg{G+S}{a}$.
    Thus, it is sufficient to show that $\ccg{G+S}{a} \leq \frac{11}{6}\ccg{G+S}{b}$.

    Let $A = N_{G+S}(a) \setminus \{b\}$. Note that by~\Cref{lemma:no-switching,lemma:termination}, we may assume that $|S| = k$, i.e., that the algorithm did not terminate early.
    It follows that $|A| \geq \frac{n}{6} - 1$. 
    Setting $B = V \setminus (A \cup \{a, b\})$, we have $|B| \leq \frac{5n}{6} - 1$.
    
    We now wish to find an upper bound for $\ccg{G+S}{a}$ in terms of $n$ and $\ccg{G+S}{b}$.
    We begin with the following:
    \begin{align*}
        \ccg{G+S}{a} &= d_{G+S}(a, a) + d_{G+S}(a, b) + \sum_{v \in A}d_{G+S}(a, v) + \sum_{v \in B} d_{G+S}(a, v) \\
        &= d_{G+S}(b, b) + d_{G+S}(b, a) + \sum_{v \in A}d_{G+S}(a, v) + \sum_{v \in B} d_{G+S}(a, v). \\
    \end{align*}
    By~\Cref{lemma:no-switching}, we may assume that $d_{G+S}(a, b) = 1$.
    Then for each $v \in B$, $d_{G+S}(a, v) \leq d_{G+S}(b, v) + 1$. 
    We use this to make a substitution in the fourth summand of the RHS:
    \begin{align*}
        \ccg{G+S}{a} &\leq d_{G+S}(b, b) + d_{G+S}(b, a) + \sum_{v \in A}d_{G+S}(a, v) + \sum_{v \in B} (d_{G+S}(b, v) + 1).
    \end{align*}

    Meanwhile, since $b \notin A$, we have $d_{G+S}(a, v) \leq d_{G+S}(b, v)$ for all $v \in A$.
    Then we may perform a substitution in the third summand to obtain the bound we desire.
    \begin{align*}
        \ccg{G+S}{a} &\leq d_{G+S}(b, b) + d_{G+S}(b, a) + \sum_{v \in A}d_{G+S}(b, v) + \sum_{v \in B} (d_{G+S}(b, v) + 1) \\
        &= \ccg{G+S}{b} + |B| \\
        &\leq \ccg{G+S}{b} + \frac{5n}{6} - 1. \\
    \end{align*}

    We now complete the proof in two cases. In the first case, we assume that $\ccg{G+S}{b} \geq n$.
    Then we may write $\ccg{G+S}{b} = n + x$ for some $x \geq 0$. We now have
    \begin{align*}
        \ccrg{G+S}{a}{b} = \frac{\ccg{G+S}{b}}{\ccg{G+S}{a}} \geq \frac{\ccg{G+S}{b}}{\ccg{G+S}{b} + \frac{5n}{6} - 1} &\geq \frac{\ccg{G+S}{b}}{\ccg{G+S}{b} + \frac{5n}{6}} \\
        &= \frac{n + x}{n + x + \frac{5n}{6}} \\
        &\geq \frac{n}{n + \frac{5n}{6}} \quad\quad(\text{by~\Cref{obs:adding-to-fractions}})\\
        &= \frac{6}{11}.
    \end{align*}

    In the other case, we have $\ccg{G+S}{b} = n - 1$. Then $b$ is adjacent to every other vertex in the graph, meaning that $a$ has distance at most $2$ to every vertex in $B$. Then
    \begin{align*}
        \ccg{G+S}{a} &\leq d_{G+S}(a,a) + 1\cdot\left((n - 1) - |B|\right) + 2\cdot|B| \\
        &= n + |B| - 1 \leq n + \frac{5n}{6} - 2 = \frac{11n}{6} - 2 < \frac{11n}{6} - \frac{11}{6} = \frac{11}{6}\cdot\ccg{G+S}{b}.        
    \end{align*}
    This completes the proof.
    \end{proof}

By~\Cref{lemma:termination}, if~\Cref{alg:neighborhood} returns on Line~\ref{algline:neighborhood-termination-line}, then it produces a~$\frac{2}{3}$-approximation. Similarly, by~\Cref{lemma:no-switching}, if~\Cref{alg:neighborhood} returns on Line~\ref{algline:neighborhood-no-switching-return}, then it produces a~$\frac{6}{11}$-approximation.
Thus, we may assume that~\Cref{alg:neighborhood} returns on Line~\ref{algline:neighborhood-final-return}.
By design, if $G$ is already $\frac{6}{11}$-approximate, then we are guaranteed to succeed.
Recall that other than the edge $ab$, every edge added by~\Cref{alg:neighborhood} is from $a$ to the neighborhood of $b$. Thus, after the edge $ab$ is added, the closeness centrality of $b$ does not change, while that of $a$ monotonically decreases.
Then we may conclude that if $G+ab$ is $\frac{6}{11}$-approximate, our algorithm returns a $\frac{6}{11}$-approximation.
It now suffices to show that if neither $G$ nor $G+ab$ is~$\frac{6}{11}$-approximate, then the sufficient condition given by~\Cref{lemma:arbitrary-edges} must occur.

\begin{lemma}\label{lemma:Bp-big}
    Let $(G = (V, E), a, b, k)$ be an instance of~\CRI{} with $\ccg{G}{b} < \ccg{G}{a}$ and $\ccg{G+ab}{b} < \ccg{G+ab}{a}$. Let $G^*$ be an optimal solution, and assume that $\max\{\ccrg{G}{a}{b}$,~$\ccrg{G+ab}{a}{b}\} < \frac{6}{11}\cdot\ccrg{G^*}{a}{b}$. 
    Then $k + |N_G(a)| \geq \frac{n}{6}$.
\end{lemma}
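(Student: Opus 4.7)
I argue by contradiction, assuming $k + |N_G(a)| < \frac{n}{6}$ in addition to the stated hypotheses. Since any edge addition can increase a vertex's degree by at most one, $|N_{G^*}(a)| \leq |N_G(a)| + k < \frac{n}{6}$, so fewer than $\frac{n}{6}$ vertices are at distance $1$ from $a$ in $G^*$ and the remainder are at distance at least $2$, yielding the key structural lower bound $\ccg{G^*}{a} > \frac{11n}{6} - 2$. I then assume without loss of generality that $\ccg{G^*}{b} \leq \ccg{G^*}{a}$ (if the reverse inequality holds, $\ccg{G^*}{b}$ inherits the same lower bound and a symmetric argument applies). Under this assumption $\ccrg{G^*}{a}{b} = \ccg{G^*}{b}/\ccg{G^*}{a}$, and by monotonicity $\ccg{G^*}{b} \leq \ccg{G}{b}$, producing the bound $\ccrg{G^*}{a}{b} \leq \ccg{G}{b}/(\frac{11n}{6} - 2)$.

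Plugging this into the $G$-side of the hypothesis and cancelling the common $\ccg{G}{b}$ factor forces $\ccg{G}{a} > \frac{121n - 132}{36}$. On the $G+ab$-side, since $\ccrg{G^*}{a}{b} \leq 1$, the hypothesis yields $\ccrg{G+ab}{a}{b} < \frac{6}{11}$; combined with \Cref{observation:adjacent} (valid because $ab \in G+ab$), this gives $\ccg{G+ab}{b} < \frac{6(n-2)}{5}$ and hence $\ccg{G+ab}{a} < \frac{11(n-2)}{5}$.

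In the main case $ab \in G^*$, the inclusion $G+ab \subseteq G^*$ sharpens monotonicity to $\ccg{G^*}{b} \leq \ccg{G+ab}{b}$, so $\ccrg{G^*}{a}{b} \leq \ccg{G+ab}{b}/(\frac{11n}{6} - 2)$. Replaying the cancellation step with $G+ab$ in place of $G$ forces $\ccg{G+ab}{a} > \frac{121n - 132}{36}$; combined with the upper bound $\ccg{G+ab}{a} < \frac{11(n-2)}{5}$, cross-multiplication gives $5(121n - 132) < 396(n-2)$, which simplifies to $209n < -132$, the desired contradiction.

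The main obstacle is the remaining case $ab \notin E(G)$ with $ab \notin G^*$, where the sharper monotonicity $\ccg{G^*}{b} \leq \ccg{G+ab}{b}$ is unavailable. I plan to dispose of this case by applying \Cref{observation:adjacent} to the auxiliary graph $G + ab + S^*$, which contains $ab$ and has $|N(a)| \leq |N_G(a)| + k + 1 < \frac{n}{6} + 1$, giving $\ccg{\cdot}{a} > \frac{11n}{6} - 3$; this yields $\ccg{G+ab+S^*}{b} \geq \frac{5n}{6} - 1$, and since $G^* \subseteq G+ab+S^*$, monotonicity gives $\ccg{G^*}{b} \geq \frac{5n}{6} - 1$. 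I then expect to combine this lower bound with both hypotheses to force a numerical contradiction analogous to the main case. Reconciling this non-$ab$-using optimum with the $ab$-centric structural bounds is where I anticipate the bulk of the remaining technical work.
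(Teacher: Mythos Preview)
Your treatment of the main case $ab \in E(G^*)$ is correct and, in fact, cleaner than the paper's corresponding argument (its Claim for $d_{G^*}(a,b)=1$). The degree bound $|N_{G^*}(a)| < \tfrac{n}{6}$ directly gives $\ccg{G^*}{a} > \tfrac{11n}{6}-2$, and your cancellation trick against the hypothesis on $G+ab$ is a nice shortcut. One remark: the ``WLOG $\ccg{G^*}{b}\le\ccg{G^*}{a}$'' is fine, but the justification is not symmetry. Rather, in \emph{either} orientation one has $\min(\ccg{G^*}{a},\ccg{G^*}{b})\le \ccg{G}{b}$ (by monotonicity, possibly via $\ccg{G^*}{a}\le\ccg{G^*}{b}\le\ccg{G}{b}$) and $\max(\ccg{G^*}{a},\ccg{G^*}{b})\ge \ccg{G^*}{a}>\tfrac{11n}{6}-2$, so the same bound on $\ccrg{G^*}{a}{b}$ follows regardless.

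The genuine gap is in the case $ab\notin E(G^*)$. Your proposed tool there, the lower bound $\ccg{G^*}{b} > \tfrac{5n}{6}-1$ obtained via $G+ab+S^*$, is vacuous: trivially $\ccg{G^*}{b}\ge n-1 > \tfrac{5n}{6}-1$, so this cannot drive any contradiction. More fundamentally, your whole engine rests on an \emph{upper} bound for $\min(\ccg{G^*}{a},\ccg{G^*}{b})$ tight enough to cancel against the hypothesis; once $ab\notin E(G^*)$ you lose $\ccg{G^*}{b}\le\ccg{G+ab}{b}$, and neither $\ccg{G^*}{b}\le\ccg{G}{b}$ nor the degree bound on $a$ gives you anything sharp enough (the former only yields $\ccg{G}{a}>\tfrac{121n-132}{36}$, which is not a contradiction by itself). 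There is no simple patch: controlling $\ccg{G^*}{b}$ when $d_{G^*}(a,b)\ge 2$ genuinely requires additional structure.

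The paper's route for this case is quite different. From $\ccrg{G+ab}{a}{b}<\tfrac{6}{11}$ it first extracts $|B^p|>\tfrac{4(n-2)}{5}$, where $B^p=N_G(b)\setminus N_G[a]$; this large private neighborhood then forces $d_{G^*}(a,b)\le 2$ (via a counting argument using $\ccrg{G^*}{a}{b}>\tfrac{11}{12}$), and in the surviving case $d_{G^*}(a,b)=2$ one bounds $\ccg{G^*}{a}$ from below by $2+\tfrac{9(n-2)}{5}-k$ (since at least $|B^p|-k$ vertices remain at distance $\ge 2$ from $a$) and $\ccg{G^*}{b}$ from above by roughly $\tfrac{7(n-2)}{5}$. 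Plugging these into $\ccg{G^*}{a}<\tfrac{12}{11}\ccg{G^*}{b}$ then forces $k>\tfrac{n}{6}$ directly. So to complete your proof, you will need to bring in the size of $B^p$ and a bound on $d_{G^*}(a,b)$; the auxiliary graph $G+ab+S^*$ alone will not suffice.
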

\begin{proof}
    Recall the notation $B^p = N_G(b) \setminus N_G[a]$ for the private neighbors of $b$.
    We begin by establishing some additional structure.

    \begin{restatable}{claim}{setsizesclaim} \label{claim:set-sizes}
        Under the hypotheses of the lemma, all of the following hold:
        \bigskip
        \begin{enumerate}[(i)]\itemsep=.7em
            \item $\ccg{G+ab}{b} < \frac{6(n-2)}{5}$, 
            \item $\ccrg{G^*}{a}{b} > \frac{11}{12}$, and
            \item $|B^p| > \frac{4(n-2)}{5}$.
        \end{enumerate}
    \end{restatable}
    \begin{proof}[Proof of~\Cref{claim:set-sizes}]
        Part (i) is a consequence of the observation that $\ccg{G+ab}{a} \leq \ccg{G+ab}{b} + n - 2$, which follows from the triangle inequality and the fact that $\distg{G+ab}{a}{b} = 1$. Thus, if $\ccg{G+ab}{b} \geq \frac{6(n-2)}{5}$, then we may write $\ccg{G+ab}{b} = \frac{6(n-2)}{5} + x$ for some non-negative $x$. Then
        \[ 
            \ccrg{G+ab}{a}{b} = \frac{\ccg{G+ab}{b}}{\ccg{G+ab}{a}} \geq \frac{\frac{6(n-2)}{5} + x}{\frac{11(n-2)}{5} + x} \geq \frac{6}{11}, 
        \]
        where in the final inequality we have used~\Cref{obs:adding-to-fractions}. This contradicts the hypothesis that $\ccrg{G+ab}{a}{b} < \frac{6}{11}\ccrg{G^*}{a}{b}$, since $\ccrg{G^*}{a}{b} \leq 1$.
    
        Part (ii) follows from the observation that $\ccrg{G+ab}{a}{b} > \frac{1}{2}$ (recall~\Cref{sec:half-approx}). Then since $\ccrg{G+ab}{a}{b} < \frac{6}{11}\ccrg{G^*}{a}{b}$, we have $\ccrg{G^*}{a}{b} > \frac{11}{6}\cdot\frac{1}{2} = \frac{11}{12}$.
    
        To prove part (iii), it is sufficient to show that $b$ has strictly more than $\frac{4(n-2)}{5}$ private neighbors in $G+ab$, since any private neighbor of $b$ in $G+ab$ is also a private neighbor of $b$ in $G$.
        The remainder of the argument is identical to the proof of~\Cref{claim:switching-large-set}, except that the roles of $a$ and $b$ have been reversed.
    \end{proof}

    \bigskip
    Our strategy will now be to use~\Cref{claim:set-sizes} to find upper bounds on $\ccg{G^*}{b}$ and lower bounds on $\ccg{G^*}{a}$, where the latter will be expressed as a function of $k$.
    We will then leverage the gap between these bounds, together with condition (ii) of~\Cref{claim:set-sizes}, to prove that $k + |N_G(a)|$ must be large.
    It is easiest to perform this analysis with a specific value of $d_{G^*}(a, b)$ in mind, since then (using the triangle inequality) we have some control over the ratios achieved on individual vertices.
    We begin by obtaining an upper bound on $\ccg{G^*}{b}$.

    \begin{restatable}{claim}{claimbupperbound}\label{claim:b-upper-bound}
        $\ccg{G^*}{b} < d_{G^*}(a, b) + \frac{4(n - 2)}{5} + (d_{G^*}(a, b) + 1)\cdot\frac{n - 2}{5}$. 
    \end{restatable}
    \begin{proof}[Proof of~\Cref{claim:b-upper-bound}]        
        Note that if $\distg{G^*}{a}{b} = 1$, then the claim follows from part (i) of~\Cref{claim:set-sizes}, using monotonicity to observe $\ccg{G^*}{b} \leq \ccg{G+ab}{b}$.
        Thus, we may assume that $\distg{G^*}{a}{b} \geq 2$.
        Let $O = V \setminus (B^p \cup \{a, b\})$. We claim it is sufficient to show that
        for each $v \in O$,
        \begin{align}\label{eq:distance-blowup}
            d_{G^*}(b, v) \leq d_{G+ab}(b, v) + d_{G^*}(a, b) - 1.
        \end{align}
        To see this, begin with a straightforward expression:
          $\ccg{G^*}{b} = \distg{G^*}{a}{b} + |B^p| + \sum_{v \in O} \distg{G^*}{b}{v}$.  
        Then using Inequality~(\ref{eq:distance-blowup}), we have
        \[
          \ccg{G^*}{b} \leq \distg{G^*}{a}{b} + |B^p| + \sum_{v \in O} \distg{G+ab}{b}{v} + (\distg{G^*}{a}{b} - 1)\cdot|O|.
        \] 
        By parts (i) and (iii) of~\Cref{claim:set-sizes}, $\sum_{v \in O}d_{G+ab}(b, v) < \frac{2(n-2)}{5}$, giving
        \[
          \ccg{G^*}{b} < \distg{G^*}{a}{b} + |B^p| + \frac{2(n-2)}{5} + (\distg{G^*}{a}{b} - 1)\cdot|O|.
        \]
        Now, using part (iii) of~\Cref{claim:set-sizes} we write $|B^p| = \frac{4(n-2)}{5} + x$ and $|O| = \frac{n-2}{5} - x$ for some positive $x$.
        Substituting, we have
        \begin{align*}
            \ccg{G^*}{b} &< \distg{G^*}{a}{b} + \frac{4(n-2)}{5} + x + \frac{2(n-2)}{5} + (\distg{G^*}{a}{b} - 1)\cdot\left(\frac{n-2}{5} - x\right) \\
            &= \distg{G^*}{a}{b} + \frac{4(n-2)}{5} + (\distg{G^*}{a}{b} + 1)\cdot\frac{n-2}{5} + (2 - \distg{G^*}{a}{b})x \\
            &\leq \distg{G^*}{a}{b} + \frac{4(n-2)}{5} + (\distg{G^*}{a}{b} + 1)\cdot\frac{n-2}{5} \quad\quad(\text{using $\distg{G^*}{a}{b} \geq 2$}).
        \end{align*}
        This proves that it suffices to show Inequality~(\ref{eq:distance-blowup}).
        Consider $v \in O$. Observe that if $d_{G}(b, v) \leq d_{G}(a, v)$, then $d_{G}(b, v) = d_{G+ab}(b, v)$.
        Then we simply observe that $d_{G^*}(b, v) \leq d_{G}(b, v) = d_{G+ab}(b,v) \leq d_{G+ab}(b,v) + d_{G^*}(a, b) - 1$, as desired.
        Thus, we may assume that $d_G(b, v) > d_G(a, v)$, in which case $d_{G+ab}(b, v) = 1 + d_{G+ab}(a, v)$.
        Then by the triangle inequality,
        \begin{align*}
            d_{G^*}(b, v) \leq d_{G^*}(a, b) + d_{G^*}(a, v) \leq d_{G^*}(a, b) + d_{G}(a, v) &= d_{G^*}(a, b) + d_{G+ab}(a, v) \\
            &= d_{G^*}(a, b) + d_{G + ab}(b, v) - 1,
        \end{align*}
        which completes the proof.
    \end{proof}

    We now combine Claims~\ref{claim:set-sizes} and~\ref{claim:b-upper-bound} to show that $d_{G^*}(a, b) \in \{1, 2\}$.
    \begin{restatable}{claim}{distancethreeclaim} \label{claim:distance-3}
        There is no optimal solution $G^*$ with $\distg{G^*}{a}{b} \geq 3$.
    \end{restatable}
    \begin{proof}[Proof of~\Cref{claim:distance-3}]
        By the triangle inequality, for each $v \in B^p$, $\distg{G}{a}{v} \geq \distg{G^*}{a}{b} - 1$. Moreover, for every vertex $v \in O = V \setminus (A^p \cup \{a,b\})$, trivially $\distg{G^*}{a}{v} \geq 1$.
        Using part (iii) of~\Cref{claim:set-sizes}, we write $|B^p| = \frac{4(n-2)}{5} + x$ and $|O| = \frac{n-2}{5} - x$, for some positive $x$. Then
        \begin{align*}
            \ccg{G^*}{a} &\geq \distg{G^*}{a}{b} + (d_{G^*}(a, b) - 1)\cdot\left(\frac{4(n-2)}{5} + x\right) + \frac{n - 2}{5} - x \\
            &= \distg{G^*}{a}{b} + (d_{G^*}(a, b) - 1)\cdot\frac{4(n-2)}{5} + \frac{n - 2}{5} + (\distg{G^*}{a}{b} - 2)x \\
            &> \distg{G^*}{a}{b} + (d_{G^*}(a, b) - 1)\cdot\frac{4(n-2)}{5} + \frac{n - 2}{5} \quad\quad(\text{If $\distg{G^*}{a}{b} \leq 2$ we're done.}) \\
            &= \distg{G^*}{a}{b} + (4d_{G^*}(a, b) - 3)\cdot\frac{n - 2}{5}.
        \end{align*}
        Meanwhile, by~\Cref{claim:b-upper-bound},
        \[
          \ccg{G^*}{b} < d_{G^*}(a, b) + (d_{G^*}(a, b) + 5)\cdot\frac{n-2}{5}.
        \]
        Finally, by part (ii) of~\Cref{claim:set-sizes},
        \begin{align*}
            \ccg{G^*}{a} &\leq \frac{12}{11}\cdot\ccg{G^*}{b}.
        \end{align*}
        Now we complete the proof by substituting our bounds and solving for $d_{G^*}(a, b)$.
        \begin{align*}            
            d_{G^*}(a, b) + (4d_{G^*}(a, b) - 3)\cdot\frac{n - 2}{5} &< \frac{12}{11}\bigg[d_{G^*}(a, b) + (d_{G^*}(a, b) + 5)\cdot\frac{n-2}{5}\bigg] \\
            \frac{32}{11}\cdot d_{G^*}(a, b)\cdot \frac{n-2}{5} &< \frac{1}{11}\cdot d_{G^*}(a, b) + \frac{93}{11}\cdot \frac{n-2}{5} \\
            32\cdot d_{G^*}(a, b)\cdot\frac{n-2}{5} &< d_{G^*}(a, b) + 93\cdot \frac{n-2}{5} \\
            d_{G^*}(a, b) &< \frac{93\cdot\frac{n-2}{5}}{32\cdot\frac{n-2}{5} - 1} \\
            d_{G^*}(a, b) &< \frac{93\cdot\frac{n-2}{5}}{32\cdot\frac{n-2}{5} - 1\cdot\frac{n-2}{5}} \quad\quad (\text{whenever } n \geq 7) \\
            d_{G^*}(a, b) &< \frac{93}{31} = 3.
        \end{align*}
        To complete the proof, note that when $n < 7$ we may assume that $d_{G^*}(a, b) < 3$. Otherwise, it is impossible to satisfy part (iii) of~\Cref{claim:set-sizes}.
    \end{proof}

    Thus, either $\distg{G^*}{a}{b} = 1$ or $\distg{G^*}{a}{b} = 2$. We handle the easy case first. 
    \begin{restatable}{claim}{distanceoneclaim} \label{claim:distance-1}
        If there exists an optimal $G^*$ with $\distg{G^*}{a}{b} = 1$, then $|N_G(a)| + k > \frac{2n}{5} > \frac{n}{6}$.
    \end{restatable}
    \begin{proof}[Proof of~\Cref{claim:distance-1}]
        Assume otherwise, i.e., that $|N_G(a)| + k \leq \frac{2n}{5}$. Then 
        $\ccg{G^*}{a} \geq \frac{2n}{5} + 2\left(\frac{3n}{5} - 1\right) = \frac{8n}{5} - 2$.  
        Meanwhile, since $d_{G^*}(a, b) = 1$, $\ccg{G^*}{b} \leq \ccg{G+ab}{b} < \frac{6(n - 2)}{5}$, where the latter inequality is from part (i) of~\Cref{claim:set-sizes}.
        But then by part (ii) of~\Cref{claim:set-sizes} we can obtain the following contradiction:
        \begin{align*}
            \ccg{G^*}{a} &< \frac{12}{11}\ccg{G^*}{b} \\
            \frac{8n}{5} - 2 &< \frac{12}{11}\left(\frac{6n}{5} - \frac{12}{5}\right) \\
            \frac{88n}{55} - \frac{110}{55} &< \frac{72n}{55} - \frac{124}{55} \\
            16n + 14 &< 0 \\
            n &< 0.
        \end{align*}
    \end{proof}

    It follows that in the remainder of the proof, we may assume that $\distg{G^*}{a}{b} = 2$. Our goal is to show that if $\ccrg{G+ab}{a}{b} < \frac{6}{11}\cdot\ccrg{G^*}{a}{b}$, then $k + |N_G(a)| \geq \frac{n}{6}$.
    We begin by using~\Cref{claim:b-upper-bound} to establish an upper bound for $\ccg{G^*}{b}$:
      $\ccg{G^*}{b} < \frac{7(n-2)}{5} + 2$.  
    
    Now we will find a lower bound for $\ccg{G^*}{a}$. Once again we define $O = V \setminus (B^p \cup \{a,b\})$, and use part (iii) of~\Cref{claim:set-sizes} to write $|B^p| = \frac{4(n-2)}{5} + x$ and $|O| = \frac{n-2}{5} - x$ for some positive $x$.
    We will also use the facts that at least $|B^p| - k$ private neighbors of $b$ in $G$ remain private neighbors of $b$ in $G^*$, and that $\meandistg{G^*}{a}{O} \geq 1$.
    \begin{align*}
        \ccg{G^*}{a} &\geq \distg{G^*}{a}{b} + 2\cdot\left(|B^p| - k\right) + k + \meandistg{G^*}{a}{O}\cdot|O| \\
        &\geq \distg{G^*}{a}{b} + 2\cdot|B^p| + 1\cdot|O| - k \\
        &= \distg{G^*}{a}{b} + \frac{8(n-2)}{5} + 2x + \frac{n-2}{5} - x - k \\
        &= 2 + \frac{9(n-2)}{5} - k + x
        > 2 + \frac{9(n-2)}{5} - k.
    \end{align*}

    We now complete the proof using part (ii) of~\Cref{claim:set-sizes}, which implies $\ccg{G^*}{a} < \frac{12}{11}\cdot\ccg{G^*}{b}$.
By substituting our upper bound on $\ccg{G^*}{b}$ and our lower bound on $\ccg{G^*}{a}$, we have
\begin{align*}
    2 + \frac{9(n-2)}{5} - k &< \frac{12}{11}\left(\frac{7(n-2)}{5} + 2\right),
\end{align*}
which simplifies to
\begin{align*}
    2 + \frac{99(n-2)}{55} - k &< \frac{84(n-2)}{55} + \frac{24}{11} \\
    \frac{15(n-2)}{55} - \frac{2}{11} &< k \\
    \frac{3(n - 2)}{11} - \frac{2}{11} &< k \\
    \frac{3n}{11} - \frac{8}{11} &< k \\
    \frac{n}{6} &< k \quad(\text{whenever }n > 3). 
\end{align*}
We note that the desired condition is trivially true when $n \leq 3$, since we may assume that $k \geq 1$. Thus, the proof is complete.
\end{proof}
Putting all the pieces together, we achieve the main result. Using BFS for centrality computations, a naive implementation of~\Cref{alg:neighborhood} runs in $O(k(n+m))$ time, but it is simple to achieve quasilinear time.
Instead of evaluating the condition on line~\ref{algline:neighborhood-termination} in each iteration of the loop, we check at the end of the loop whether $\ccg{G+S}{a} \leq \ccg{G+S}{b}$. If so, then (using the fact that $\ccg{G+S}{a}$ monotonically decreases while $\ccg{G+S}{b}$ stays constant) we determine the correct return value via a binary search over the $k-1$ previous states of $S$.
\begin{theorem}\label{thm:approx}
    \Cref{alg:neighborhood} is a $\frac{6}{11}$-approximation for \CRI{}, running in $O((n+m)\log k)$ time.
\end{theorem}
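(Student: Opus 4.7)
The plan is to prove \Cref{thm:approx} by synthesizing the four lemmas of the section via a case analysis on which return statement is executed, and then separately argue for the runtime improvement. Throughout, I would assume WLOG that $\ccg{G}{b} < \ccg{G}{a}$, since otherwise \Cref{alg:neighborhood} can be run with the roles swapped.

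For correctness, I would split on the return line. If \Cref{alg:neighborhood} exits on Line~\ref{algline:neighborhood-no-switching-return}, then either $G$ already satisfies $\ccrg{G}{a}{b} \geq \tfrac{6}{11}$ or (by part (ii) of \Cref{lemma:no-switching}) $G+ab$ does, so returning the better of the two suffices. If the algorithm exits on Line~\ref{algline:neighborhood-termination-line}, then \Cref{lemma:termination} applies directly to the current candidate, yielding a $\tfrac{2}{3}$-approximation. (The hypothesis $\ccrg{G}{a}{b} < \tfrac{2}{3}$ of that lemma is not actually restrictive: if it fails, the returned $S$ is still at least as good as $G$, which already exceeds $\tfrac{2}{3} > \tfrac{6}{11}$.) Otherwise, the algorithm reaches Line~\ref{algline:neighborhood-final-return} having successfully added $k$ edges. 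In that case, if $G$ is $\tfrac{6}{11}$-approximate we are done by the \emph{"whichever is better"} clause. If $G+ab$ is $\tfrac{6}{11}$-approximate, then the same holds for $G+S$, because after adding $ab$ every subsequent edge is incident to $a$, so $\ccg{G+S}{b} = \ccg{G+ab}{b}$ remains constant while $\ccg{G+S}{a}$ only decreases, and we can apply \Cref{lemma:no-switching}(i) to ensure no ``switching'' spoils the ratio. Finally, if neither $G$ nor $G+ab$ is $\tfrac{6}{11}$-approximate, then \Cref{lemma:Bp-big} gives $k + |N_G(a)| \geq \tfrac{n}{6}$, and \Cref{lemma:arbitrary-edges} then certifies a $\tfrac{6}{11}$-approximation.

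For the runtime, the naive approach evaluates the condition on Line~\ref{algline:neighborhood-termination} in each of $k$ iterations, each requiring a BFS from $a$ and from $b$, for $O(k(n+m))$ total. To reach $O((n+m)\log k)$, I would defer the test: run the loop without checking Line~\ref{algline:neighborhood-termination}, recording the sequence of edges $au_1, \ldots, au_k$ added. Then perform one BFS on $G + \{au_1,\ldots,au_k\}$. If $\ccg{G+S}{a} > \ccg{G+S}{b}$, return $S$. Otherwise, because $\ccg{G+S_i}{b}$ is constant in $i$ (after $ab$ is added) while $\ccg{G+S_i}{a}$ is monotonically nonincreasing in $i$, the predicate ``$\ccg{G+S_i}{a} \leq \ccg{G+S_i}{b}$'' is monotone in $i$, so a binary search over $i$ finds the smallest switching index in $O(\log k)$ BFS calls. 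Comparing the two candidate returns (as in Line~\ref{algline:neighborhood-termination-line}) at that index finishes the computation in $O((n+m)\log k)$ time.

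The main obstacle I expect is the third bullet of the case analysis: verifying that when the algorithm terminates normally on Line~\ref{algline:neighborhood-final-return} with neither $G$ nor $G+ab$ already $\tfrac{6}{11}$-approximate, the hypotheses of \Cref{lemma:Bp-big} are genuinely satisfied (in particular $\ccg{G+ab}{b} < \ccg{G+ab}{a}$, which must be inherited from \Cref{lemma:no-switching}(i), since the loop was entered rather than exited on Line~\ref{algline:neighborhood-no-switching-return}). Once that inheritance is formalized, the contrapositive structure ``if both lemmas' conclusions fail to apply, we would have a strictly better optimum than any achievable ratio'' seals the proof.
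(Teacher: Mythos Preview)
Your proposal is correct and mirrors the paper's own argument essentially step for step: the same case split on the three return lines, the same invocation of \Cref{lemma:no-switching} and \Cref{lemma:termination} for the early exits, the same monotonicity observation (after $ab$, only $\ccg{G+S}{a}$ changes) to propagate $\tfrac{6}{11}$-approximateness from $G+ab$ to $G+S$, the same appeal to \Cref{lemma:Bp-big} followed by \Cref{lemma:arbitrary-edges} for the remaining case, and the same deferred-check-plus-binary-search trick for the $O((n+m)\log k)$ runtime. One small point of precision: in the Line~\ref{algline:neighborhood-final-return} case, the reason ``no switching spoils the ratio'' is not \Cref{lemma:no-switching}(i) per se but simply the loop invariant---reaching Line~\ref{algline:neighborhood-final-return} means Line~\ref{algline:neighborhood-termination} never fired, so $\ccg{G+S}{a} > \ccg{G+S}{b}$ held throughout.
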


\section{Conclusion and Open Problems}

In this paper, we used closeness centrality to represent social capital in networks and introduced two novel problems aimed at equalizing this capital between two specified nodes via edge additions. For \CRI{}, we proved \cclass{NP}-hardness for all target ratios $\tau \in (\frac{1}{2}, 1]$, while observing that the problem is trivially solvable $\tau \leq \frac{1}{2}$: just add the edge $ab$ between the relevant vertices. We then showed how to quantify and harness the structure of non-trivial instances, i.e., instances where adding $ab$ is not sufficient, leading to a scalable (quasilinear time) $\frac{6}{11}$-approximation. Additionally, we proved the problem is inapproximable within any factor > $\frac{5e}{5e + 1 - \varepsilon} \approx .932$. For \CGM{}, we also showed \cclass{NP}-hardness and further ruled out any multiplicative approximation.

Motivated by the broader goal of promoting fairness across an entire network, we now propose two new problems that extend the scope of \CRI{} and \CGM{}. First, given two groups of vertices, the goal is to add at most $k$ edges to the graph to maximize the ratio between the least central node in each group. 
In the following, we write~$\ccworst{G}{A}$ for $\argmax_{a \in A}\ccg{G}{a}$.

\optproblemdef{\textsc{Group Closeness Ratio Improvement}}{A graph $G=(V,E)$, two groups $A,B\subseteq V$, $A \cap B = \emptyset$, and $k \in \mathbb N$.}{Find $S \subseteq V^2 \setminus E$, $|S| \leq k$, maximizing $\ccrg{G+S}{\ccworst{G+S}{A}}{\ccworst{G+S}{B}}$.}

This is a generalization of \CRI{}, but it seems that new insights will be needed to tackle it. For example, if $|A|, |B|$ are unbounded with respect to $k$, then there is not a clear way to lift the trivial approach outlined in~\Cref{sec:half-approx}, which is also a crucial component of the algorithm in~\Cref{sec:approximation-alg}. Even when $|A|, |B|$ are bounded, it is not clear that useful generalizations of our Lemmas~\ref{lemma:no-switching} and~\ref{lemma:termination} can be obtained, and the lower bound on $k$ required for a straightforward generalization of~\Cref{lemma:arbitrary-edges} is likely to be very large.
One might also consider optimizing the ratio of mean or best centralities within each group; the minimax formulation presented above is common in the fairness literature, e.g.,~\cite{bashardoust}, usually with the justification that it focuses on aiding the most disadvantaged individuals.

We also propose the problem of adding edges to a graph to maximize the worst pairwise closeness ratio. A ratio close to 1 indicates that all nodes are similarly advantaged. 

\optproblemdef{\textsc{All-Pairs Closeness Ratio Improvement}}{A graph $G=(V,E)$ and $k\in\mathbb{N}$.}{Find $S \subseteq V^2\setminus E$, $|S| \leq k$, maximizing $\frac{\min_{u\in V}\ccg{G+S}{u}}{\max_{v\in V}\ccg{G+S}{v}}$.}

Finally, we suggest that while much algorithmic work has been devoted to graph modification with the intent of optimizing some characteristic of a given node, modification to \emph{equalize} node characteristics is both well-motivated and relatively under-explored. For example, it may be interesting to consider variants of our problems that equalize betweenness or PageRank.\looseness=-1

\section*{Acknowledgements}
This work was supported in part by the National Science Foundation under awards IIS-1955321 to Sorelle A. Friedler and IIS-1956286 to Blair D. Sullivan.

\bibliographystyle{abbrvnat}
\bibliography{refs}

\end{document}